\documentclass[a4paper,fleqn]{cas-sc}

\usepackage{amsmath,amssymb,amsthm,mathtools}
\usepackage{graphicx}
\usepackage{float}
\usepackage{capt-of}
\usepackage{enumitem}
\usepackage{xcolor}
\usepackage{url}
\usepackage{booktabs}
\usepackage{mathrsfs}
\usepackage[title]{appendix}
\usepackage{algorithm}
\usepackage{algorithmicx}
\usepackage{algpseudocode}
\usepackage{listings}
\usepackage[numbers]{natbib}
\newcommand{\headinggap}{\vspace{0.55\baselineskip}}

\newtheorem{theorem}{Theorem}[section]
\newtheorem{lemma}[theorem]{Lemma}

\newtheorem{corollary}[theorem]{Corollary}
\newtheorem{observation}[theorem]{Observation}
\newtheorem{construction}[theorem]{Construction}
\theoremstyle{definition}
\newtheorem{definition}[theorem]{Definition}
\theoremstyle{remark}
\newtheorem{remark}[theorem]{Remark}

\DeclareMathOperator{\mcr}{mcr}
\DeclareMathOperator{\ecc}{ecc}

\begin{document}
\let\WriteBookmarks\relax
\def\floatpagepagefraction{1}
\def\textpagefraction{.001}

\shorttitle{Efficient \texorpdfstring{$k$}{k}-limited broadcast domination}
\shortauthors{Bharadwaj and A. Senthil Thilak}

\title[mode=title]{A note on efficient \texorpdfstring{$k$}{k}-limited broadcast domination in graphs}

\author[1]{Bharadwaj}
\ead{bwajhsvj@gmail.com}

\author[2]{{\color{black}A. Senthil Thilak}}[orcid=0000-0001-5793-5614]
\ead{thilak@nitk.edu.in, asthilak@gmail.com}

\affiliation[1]{organization={Department of Mathematical and Computational Sciences, National Institute of Technology Karnataka},
                 city={Surathkal},
                 postcode={575025},
                 country={India}}

\begin{abstract}
An efficient $k$-limited dominating broadcast, or  $k$-ELDB, is a  $k$-limited broadcast in which every vertex is dominated exactly once. This notion brings together efficient domination and limited broadcast domination in a common framework. For a graph  $G$, we write  $\mcr(G)$ for the smallest integer  $k$ for which  $G$ admits a  $k$-ELDB. For an admissible value  $k\ge \mcr(G)$, we denote by  $\gamma_{ebk}(G)$ the minimum cost of a  $k$-ELDB on  $G$, and is called the $k$-efficient broadcast domination number of $G$.

In this paper, we study these parameters from an algorithmic perspective with complexity analysis. We develop a dynamic programming algorithm for trees which, for fixed  $k$, computes  $\gamma_{ebk}(T)$ and thereby obtains a polynomial-time procedure for determining  $\mcr(T) $. In contrast, we prove that, for every fixed integer  $k\ge 1$, deciding whether a graph admits a  $k$-ELDB is NP-complete for arbitrary graphs. These results place efficient limited broadcast domination in a natural complexity framework, with trees forming a tractable class and arbitrary graphs remaining computationally hard.
\end{abstract}

\begin{keywords}
efficient domination \sep broadcast domination \sep dynamic programming \sep NP-completeness \sep complexity analysis \sep trees
\end{keywords}

\maketitle
\par

\section{Introduction}
\headinggap
Domination is one of the central notions in graph theory. A set
$S\subseteq V(G)$ is a dominating set of a graph $G$ if every vertex of
$G$ belongs to the closed neighborhood of some vertex of $S$. A dominating
set is called \emph{efficient} if each vertex of $G$ is dominated by exactly one
vertex of $S$. Equivalently, the closed neighborhoods of the vertices of
$S$ form a partition of $V(G)$. It has been
studied under several names, including perfect codes and efficient domination
\cite{bange1988,haynes1998}.

Broadcast domination offers a different way to extend the classical domination problem, in which a vertex may dominate vertices at distances greater than one, according to the strength (or cost) assigned to each vertex in the dominating set. The notion was introduced by Erwin~\cite{erwin2001cost} and was further developed by Dunbar et al.~\cite{dunbar2006broadcasts}. The usual optimization problem is to minimize the total broadcast cost, namely the sum of the strengths assigned to all vertices. Heggernes and Lokshtanov showed that ordinary broadcast domination can be solved in polynomial time for arbitrary graphs~\cite{heggernes2006}. Broadcast parameters have been studied extensively on trees and related families; see, for example, \cite{cockayne2011broadcasts,herke2009dominating,herke2009radial,mynhardt2013class,lunney2015more}. The limited version, in which all broadcast strengths are bounded above by a fixed integer $k$, has also been studied; see, for instance, \cite{caceres2018}.

The notion studied here lies between these two directions. Let $G$ be a
connected graph and let $k\ge 1$. A $k$-limited broadcast is a function
$f:V(G)\to \{0,1,\dots,k\}$. A vertex  $v $ with  $f(v)>0 $ is called a \emph{broadcasting vertex}, or simply a \emph{broadcaster}. In broadcast domination, each vertex $v$ with $f(v)\geq 0 $ dominate every vertex  $u\in V(G) $ satisfying
 $
d(u,v)\le f(v)
 $. 
 The broadcast $f$ is dominating if every vertex of $G$ is dominated by at least one broadcaster. Its cost is $\sum_{v\in V(G)}f(v)$. A $k$-limited dominating broadcast is called an efficient $k$-limited dominating broadcast, or a $k$-ELDB, if every vertex of $G$ is dominated by exactly one broadcasting vertex.

We denote by $\mcr(G)$ the least integer $k\ge 1$ for which $G$ admits a
$k$-ELDB. Thus $\mcr(G)=1$ precisely when $G$ is efficiently dominatable
in the classical sense. For a fixed admissible value of $k$, that is, for
$k\ge \mcr(G)$, the $k$-efficient broadcast domination number of $G$, denoted by $\gamma_{ebk}(G)$ is the minimum cost of a
$k$-ELDB on $G$. 

The open problem posed in the broadcast domination chapter of
\cite{haynes2021structures}, namely, "What is the smallest value of $k$ for which a graph $G$ has an
efficient $k$-limited dominating broadcast?"
 serves as the main motivation for this article. The parameter $\mcr(G)$ captures
exactly this question and places efficient domination and limited broadcast
domination into a common framework.

The main contributions of this article are the following:
\begin{itemize}[leftmargin=2em]
    \item We prove that, for every fixed integer $k\ge 1$, deciding whether a
    graph admits a $k$-ELDB is NP-complete for arbitrary graphs. The reduction is
    from Exact $1$-in-$3$ SAT. Consequently, computing $\mcr(G)$ is
    NP-hard for arbitrary graphs.

    \item For every integer $k\ge 1$, constructively we prove the existence of a graph $G$ such that
    $\mcr(G)=k$. This shows that each $k\in Z^+$ occur as the minimum
    covering radius of some graph.
    \item We present a polynomial-time dynamic programming algorithm for trees.
    For a fixed integer $k \geq 1$, the algorithm computes $\gamma_{ebk}(T)$,
    constructs an optimal $k$-ELDB, and decides whether a tree $T$ admits
    a $k$-ELDB. As a consequence, $\mcr(T)$ can be computed in polynomial
    time.
\end{itemize}

\section{A dynamic programming algorithm for trees}
\headinggap
For a fixed integer $k\ge 1$, the $k$-ELDB decision problem asks whether a given graph $G$ admits an efficient $k$-limited dominating broadcast. In this section, we study this problem on trees and show that it is solvable in polynomial time. The key idea is that, once a tree is rooted, each subtree navigates the rest of the tree only through its root. This makes it possible to summarize the effect of broadcasts inside the subtree and then combine the child subtrees in a bottom-up approach.

Throughout this section, assume $T$ to be a tree rooted at a vertex $r$. For a
vertex $v$, let $C(v)$ denote the set of children of $v$, let $T_v$ be
the subtree of $T$ rooted at $v$, and let $\ecc_T(v)$ be the eccentricity of $v$
in $T$. A vertex $v$ is called an internal vertex if $C(v)\neq \emptyset$. Since the one-vertex tree is trivial, we assume that $|V(T)|\ge 2$. In particular,
every leaf of $T$ has positive eccentricity. Fix an integer $k\ge 1$, and let $[k]_0=\{0,1,2,\dots,k\}$.

\subsection{Boundary states}
\headinggap
 For a rooted subtree $T_v$, a vertex of $T_v$ is said to be dominated \emph{from inside $T_v$} if it is dominated by a broadcaster lying in $T_v$. We define two state values for every $v\in V(T)$ and every $a\in [k]_0$: the inside state $I_v(a)$ and the outside state $O_v(a)$.

We use the following convention. If a broadcaster $x$ of strength $f(x)$ dominates a vertex $y$, then
\[
f(x)-d_T(x,y)
\]
is the amount of broadcast strength still available when the broadcast reaches $y$.

\begin{definition}
For $a\in [k]_0$, let $I_v(a)$ denote the minimum cost of a $k$-limited broadcast assignment on $T_v$ such that every vertex of $T_v$ is dominated exactly once from inside $T_v$. In addition, if $x\in V(T_v)$ is the unique broadcaster that dominates the root $v$, then
\[
f(x)-d_T(x,v)=a.
\]
If no such assignment exists, we set $I_v(a)=\infty$.
\end{definition}

\begin{definition}
For $a\in [k]_0$, let $O_v(a)$ denote the minimum cost of a $k$-limited broadcast assignment on $T_v$ such that no vertex of $T_v$ at distance at most $a$ from $v$ is dominated from inside $T_v$, while every vertex of $T_v$ at distance greater than $a$ from $v$ is dominated exactly once from inside $T_v$. If no such assignment exists, we set $O_v(a)=\infty$.
\end{definition}

Thus $I_v(a)$ represents the case where the whole subtree $T_v$, including its root $v$, is already dominated exactly once by broadcasters inside $T_v$. The value $a$ records how much strength remains when the broadcast reaches the root $v$. On the other hand, $O_v(a)$ represents the case where the vertices of $T_v$ within distance at most $a$ from $v$ are left to be dominated from outside $T_v$, while the remaining vertices of $T_v$ are already dominated exactly once from inside.

At the root $r$ of the tree $T$, nothing can be left to vertices outside the tree. Hence only
the inside states can produce a $k$-ELDB of the whole tree, and therefore
\[
\gamma_{ebk}(T)=\min\{I_r(a):a\in [k]_0\}.
\]
Here an entry equal to $\infty$ means, as in the definitions above, that no
broadcast assignment satisfying the stated conditions exists. Thus $T$ admits a
$k$-ELDB if and only if the above minimum is finite.

\subsection{Initialization at the leaves}
\headinggap
The leaves provide the base cases.

\begin{lemma}\label{lem:leaf-init-tree}
If $v$ is a leaf of a nontrivial rooted tree $T$, then $O_v(a)=0$ for all $a\in [k]_0$, while
$I_v(0)=\infty$ and
\[
I_v(a)=
\begin{cases}
a, & \text{if } 1\le a\le \min\{k,\ecc_T(v)\},\\[4pt]
\infty, & \text{otherwise.}
\end{cases}
\]
\end{lemma}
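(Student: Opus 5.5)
Since $v$ is a leaf, $T_v$ consists of the single vertex $v$, so a $k$-limited broadcast assignment on $T_v$ is simply a choice of a value $f(v)\in[k]_0$. The plan is to unpack the two definitions for this one-vertex subtree, treating the outside states and the inside states separately. One convention has to be stated explicitly at the outset: a broadcast in $T$ has strength at most $\ecc_T(v)$, because strength beyond the eccentricity reaches nothing new in the whole tree. This is the only place the ambient tree $T$ enters, and it is the step I expect to be the main obstacle, since the lemma's upper bound $\min\{k,\ecc_T(v)\}$ depends on it.

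\emph{Outside states.} Fix $a\in[k]_0$ and take $f(v)=0$, which has cost $0$. The only vertex of $T_v$ is $v$, and $d_T(v,v)=0\le a$. So the definition of $O_v(a)$ requires that $v$ is not dominated from inside, which holds since there is no broadcaster. The condition on vertices at distance greater than $a$ is vacuous. Costs are nonnegative, so $O_v(a)=0$ for all $a$.

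\emph{Inside states.} For $I_v(a)$ the vertex $v$ must be dominated from inside $T_v$, and the only possible broadcaster is $v$ itself. So $f(v)\ge 1$, and the residual strength at the root is $f(v)-d_T(v,v)=f(v)$. Hence the condition ``residual equals $a$'' forces $f(v)=a$, which has cost $a$. Since $v$ is its own unique dominator, exactness holds automatically.

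Three cases follow.
\begin{itemize}
\item If $a=0$, then $f(v)=0$ leaves $v$ undominated, so $I_v(0)=\infty$.
\item If $1\le a\le\min\{k,\ecc_T(v)\}$, the assignment $f(v)=a$ is admissible and is the only one, so $I_v(a)=a$.
\item If $a>\ecc_T(v)$, the assignment is excluded by the convention above, so $I_v(a)=\infty$. The case $a>k$ does not arise, since $a\in[k]_0$.
\end{itemize}

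\textbf{Justifying the cap.} The cap $a\le\ecc_T(v)$ is not a consequence of the bare definition of $I_v$. It must be adopted as the standing broadcast convention, namely $f(x)\le\ecc_T(x)$, as in Erwin's and Dunbar et al.'s definitions of broadcasts.

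I would also remark why the cap loses nothing. Suppose $f(v)>\ecc_T(v)$. Then $v$ dominates every vertex of $T$, so no other broadcaster can exist, and lowering $f(v)$ to $\ecc_T(v)$ gives a valid $k$-ELDB of strictly smaller cost. Such assignments are therefore never optimal.

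The hypothesis that $T$ is nontrivial guarantees $\ecc_T(v)\ge1$. So the range $1\le a\le\min\{k,\ecc_T(v)\}$ is nonempty, and $I_v(1)=1$ is always finite.
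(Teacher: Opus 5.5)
Your proof is correct and follows the same route as the paper: since $T_v=\{v\}$, the zero assignment realizes every outside state, and the inside state forces $f(v)=a$, which is infeasible when $a=0$. Your explicit remark that the cap $a\le\ecc_T(v)$ comes from the standard broadcast convention $f(x)\le\ecc_T(x)$, and not from the paper's bare definition of a $k$-limited broadcast, is a fair clarification. The paper leaves this implicit in the phrase ``whenever a broadcast of strength $a$ is allowed.''
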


\begin{proof}

If $v$ is a leaf, then $T_v$ consists only of the single vertex $v$. Hence $O_v(a)=0$, for all $a\in  [k]_0$. For the inside states,
the only possible broadcaster is $v$ itself, which yields cost $a$ whenever a
broadcast of strength $a$ is allowed.
\end{proof}

\subsection{Local recurrences}
\headinggap
We now compute the values $I_v(a)$ and $O_v(a)$ for an internal vertex $v$
from the already computed values of its children. The following observation
explains why the child subtrees can be combined independently.

\begin{observation}\label{obs:through-child-tree-short}
Let $u$ be a child of $v$, and let $x\in V(T_u)$. If a broadcaster $x$
dominates a vertex outside $T_u$, then it also dominates $u$.
\end{observation}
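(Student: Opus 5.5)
The plan is to use the fact that in a tree there is a unique path between any two vertices, together with the rooted structure. Let $x\in V(T_u)$ be a broadcaster, and suppose it dominates a vertex $y\notin V(T_u)$, so that $d_T(x,y)\le f(x)$. The goal is to show that $u$ lies on the unique $x$--$y$ path in $T$. The claim then follows, because distance along a geodesic is monotone: every vertex on that path is at least as close to $x$ as $y$ is.

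The first step is to establish that every path from a vertex of $T_u$ to a vertex outside $T_u$ passes through $u$. I would argue this directly from the definition of $T_u$ as $u$ together with all its descendants. The only edge of $T$ joining $V(T_u)$ to $V(T)\setminus V(T_u)$ is the edge $uv$. Indeed, any other edge incident to a vertex $w\in V(T_u)$ joins $w$ either to a child of $w$, which again lies in $T_u$, or to the parent of $w$. That parent lies in $T_u$ unless $w=u$, in which case it is $v$. Consequently, the unique $x$--$y$ path must cross the edge $uv$, and it contains $u$. Equivalently, removing $u$ disconnects $T_u-u$ from the rest of the tree, except in the degenerate case $x=u$, where the conclusion is immediate anyway.

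The second step is the distance computation. Since $u$ lies on the unique $x$--$y$ geodesic, we have $d_T(x,y)=d_T(x,u)+d_T(u,y)$. Hence $d_T(x,u)\le d_T(x,y)\le f(x)$, so $x$ dominates $u$. As a byproduct, the remaining strength satisfies $f(x)-d_T(x,u)\ge d_T(u,y)\ge 1$. This quantitative form is the version the later recurrences will actually use: a broadcast from inside $T_u$ reaches outside $T_u$ only through $u$, and only with positive residual strength at $u$.

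There is no genuine obstacle here. The only point needing care is the cut-edge argument in the first step, which I would state explicitly rather than appeal to intuition.
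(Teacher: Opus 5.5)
Your proposal is correct and follows the same approach as the paper, which gives only the one-line reason that every path from $x$ to a vertex outside $T_u$ passes through $u$. Your cut-edge argument and the geodesic distance inequality $d_T(x,u)\le d_T(x,y)\le f(x)$ simply make that reasoning explicit.
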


\begin{proof}
The result follows trivially since every path from $x$ to a vertex outside $T_u$ passes through $u$.
\end{proof}
Using this observation, we obtain the recurrence relations below. We begin with the outside states.

{
\begin{lemma}[Outside recurrence]\label{lem:outside-recurrence-tree}
Let $v$ be a vertex with $C(v)\neq \emptyset$. Then
$
O_v(0)=\sum_{u\in C(v)} I_u(0),
$
and for $1\le a\le k$,
$
O_v(a)=\sum_{u\in C(v)} O_u(a-1).
$
\end{lemma}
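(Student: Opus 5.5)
The plan is to prove both identities by the same two-way argument. I will show that a broadcast on $T_v$ meeting the conditions defining $O_v(a)$ is exactly a union of broadcasts on the child subtrees $T_u$, $u\in C(v)$, each meeting the child condition, with $f(v)=0$. Because costs add over the disjoint vertex sets $V(T_u)$, the minima then add as well, with the convention that a sum containing $\infty$ is $\infty$.

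\emph{Case $a=0$.} Take an assignment $f$ realizing $O_v(0)$. Since $v$ lies at distance $0$ from itself, it is not dominated from inside $T_v$, so $f(v)=0$. Let $x\in V(T_u)$ be a broadcaster. If $x$ dominated some vertex outside $T_u$, then the path to that vertex would pass through $u$ and then $v$, so $x$ would dominate $v$, which is forbidden. Hence broadcasters of $T_u$ act only inside $T_u$. Every vertex of $T_u$ is at distance at least $1$ from $v$, so it is dominated exactly once from inside $T_v$. By the previous remark, that dominator lies in $T_u$. Now let $x$ be the broadcaster dominating $u$. Its residual $f(x)-d_T(x,u)$ must be $0$, because a residual of at least $1$ would reach $v$. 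So $f|_{T_u}$ is admissible for $I_u(0)$, which gives $O_v(0)\ge\sum_u I_u(0)$.

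Conversely, take optimal assignments for the values $I_u(0)$ and glue them together with $f(v)=0$. A residual of $0$ at $u$ means, via Observation~\ref{obs:through-child-tree-short}, that no broadcaster of $T_u$ reaches $v$ or any other subtree. Therefore each vertex other than $v$ is dominated exactly once, and $v$ is not dominated from inside. This gives the reverse inequality.

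\emph{Case $1\le a\le k$.} Again $f(v)=0$. The key point is that each child $u$ is at distance $1\le a$ from $v$, so $u$ is not dominated from inside $T_v$. By Observation~\ref{obs:through-child-tree-short}, no broadcaster in $T_u$ dominates anything outside $T_u$. Thus domination from inside $T_v$ of a vertex $y\in V(T_u)$ coincides with domination from inside $T_u$. Moreover, $d(y,v)\le a$ holds if and only if $d(y,u)\le a-1$. The condition for $O_v(a)$, restricted to $T_u$, is therefore exactly the condition for $O_u(a-1)$.

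For the converse, glue optimal assignments for the values $O_u(a-1)$ with $f(v)=0$. In each of them, $u$ is not dominated from inside (since $d(u,u)=0\le a-1$). Hence, by the same observation, no broadcaster of $T_u$ escapes $T_u$, and the conditions for $O_v(a)$ follow. The leaf case of a child, with $O_u=0$, fits this argument without change.

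The main obstacle is the independence step: showing that broadcasters in one child subtree never interfere with $v$ or with sibling subtrees. In the case $a=0$ this is where the residual exactly $0$ at $u$ is forced, so that $I_u(0)$ rather than some other $I_u(a')$ appears. I will handle this uniformly through Observation~\ref{obs:through-child-tree-short}, combined with the fact that $v$ (respectively each child $u$) is undominated from inside.
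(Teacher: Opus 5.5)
Your proof is correct and follows the same route as the paper: you decompose over the child subtrees, with $I_u(0)$ forced when $a=0$ because the dominator of $u$ cannot reach $v$, and $O_u(a-1)$ when $a\ge 1$ via the shift $d(y,v)=d(y,u)+1$. You also make explicit the gluing direction and the fact that no broadcaster escapes its own child subtree (via Observation~\ref{obs:through-child-tree-short}), both of which the paper's proof leaves implicit.
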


\begin{proof}
For $O_v(0)$, the vertex $v$ is left to be dominated from outside $T_v$,
while every vertex in each child subtree $T_u$ must already be dominated
exactly once from inside $T_u$. Moreover, the broadcast that dominates $u$
from inside $T_u$ cannot also reach $v$; otherwise $v$ would not be left to be dominated from
the outside. Hence the required state in $T_u$ is $I_u(0)$. Summing over all
child subtrees gives
$
O_v(0)=\sum_{u\in C(v)} I_u(0).
$

For $1\le a\le k$, the vertices of $T_v$ at distance at most $a$ from $v$
are left to be dominated from outside $T_v$. In a child subtree $T_u$, this
corresponds exactly to leaving the vertices at distance at most $a-1$ from
$u$ to the outside, while all remaining vertices of $T_u$ are dominated
exactly once from inside. Thus the child subtree contributes $O_u(a-1)$, and
the formula follows by summing over all children.
\end{proof}
}

{
For the inside states $I_v(a)$, the unique broadcaster that dominates $v$ is
either $v$ itself or lies in exactly one child subtree.
}

{
\begin{lemma}[Self-source contribution]\label{lem:self-source-tree}
In the computation of $I_v(a)$, suppose the unique broadcaster that dominates
$v$ is $v$ itself. Then the corresponding minimum possible cost is
\[
\operatorname{Self}_v(a)=
\begin{cases}
a+\displaystyle\sum_{u\in C(v)} O_u(a-1),
& \text{if }1\le a\le \min\{k,\ecc_T(v)\},\\[10pt]
\infty, & \text{otherwise.}
\end{cases}
\]
\end{lemma}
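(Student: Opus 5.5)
We prove the formula by showing that a decomposition along the children of $v$ is a bijection between admissible configurations, and that costs add up.

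First we handle the excluded values of $a$. If $v$ is the unique broadcaster dominating itself, then $f(v)-d_T(v,v)=f(v)=a$. So $a\ge 1$, because a broadcaster has positive strength, and $a\le k$, because $f$ is $k$-limited. The bound $a\le \ecc_T(v)$ is the same convention as in Lemma~\ref{lem:leaf-init-tree}: a strength exceeding the eccentricity reaches no new vertex and is never useful in an efficient broadcast. This justifies the value $\infty$ outside $1\le a\le\min\{k,\ecc_T(v)\}$.

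Now fix an admissible $a$ and let $f$ be a $k$-limited assignment on $T_v$ realizing the $I_v(a)$ condition, with $v$ as the broadcaster dominating $v$, so $f(v)=a$. Consider a child $u\in C(v)$ and a vertex $y\in T_u$. Then $d_T(v,y)=1+d_T(u,y)$, so $v$ dominates $y$ exactly when $d_T(u,y)\le a-1$. Efficiency forces the following on $f|_{T_u}$:
\begin{itemize}[leftmargin=2em]
\item No broadcaster in $T_u$ dominates a vertex $y$ with $d_T(u,y)\le a-1$, since such a $y$ is already dominated by $v$.
\item Every $y$ with $d_T(u,y)>a-1$ is dominated exactly once from inside $T_u$. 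By Observation~\ref{obs:through-child-tree-short}, a broadcaster in another child subtree $T_{u'}$ that reached $y$ would also reach $v$, which is impossible since $v$ is dominated only by itself. The vertex $v$ does not reach such a $y$ by definition.
\end{itemize}
Hence $f|_{T_u}$ is feasible for $O_u(a-1)$, and the total cost satisfies $a+\sum_u \mathrm{cost}(f|_{T_u})\ge a+\sum_u O_u(a-1)$.

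Conversely, take optimal assignments $f_u$ for each $O_u(a-1)$ and set $f(v)=a$. We must verify that the result is efficient on $T_v$.
\begin{itemize}[leftmargin=2em]
\item The vertex $v$ is dominated by itself, with remaining strength $a$.
\item A vertex $y\in T_u$ with $d_T(u,y)\le a-1$ is dominated by $v$. No broadcaster inside $T_u$ reaches it, by the definition of $O_u(a-1)$.
\item A vertex $y\in T_u$ with $d_T(u,y)>a-1$ is dominated exactly once from inside $T_u$, and not by $v$.
\end{itemize}

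The main obstacle is excluding cross-domination: a broadcaster $x\in T_u$ reaching $v$ or some vertex of $T_{u'}$. By Observation~\ref{obs:through-child-tree-short}, such an $x$ would dominate $u$, and $u$ lies within distance $0\le a-1$ of $u$. That contradicts the $O_u(a-1)$ condition. This step is where $a\ge 1$ is essential: it guarantees that $u$ itself is in the region left to the outside. The two directions give equality, which proves the formula.
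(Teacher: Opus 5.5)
Your proof is correct and follows the paper's approach. The paper argues the same way: $v$ broadcasts with strength exactly $a$, and its broadcast covers precisely the vertices of each $T_u$ within distance $a-1$ of $u$, so each child subtree must be in state $O_u(a-1)$; the eccentricity bound is treated as a broadcast convention, as you also do. You go further than the paper by proving both inequalities explicitly and by using Observation~\ref{obs:through-child-tree-short} (with $a\ge 1$) to rule out a broadcaster in one child subtree reaching $v$ or a sibling subtree, a step the paper leaves implicit.
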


\begin{proof}
If $v$ dominates itself, then $v$ must broadcast with strength $a$. This is
possible only when $1\le a\le \min\{k,\ecc_T(v)\}$. In that case, for each
child $u$, the broadcast from $v$ dominates exactly the vertices of $T_u$
within distance at most $a-1$ from $u$. The remaining vertices of $T_u$
must therefore be dominated exactly once from inside $T_u$, which is the state
$O_u(a-1)$. Adding the cost $a$ at $v$ gives the stated formula.
\end{proof}
}

{
\begin{lemma}[Child-source contribution]\label{lem:child-source-tree}
In the computation of $I_v(a)$, suppose the unique broadcaster that dominates
$v$ lies in exactly one child subtree. Then the corresponding minimum possible cost is
\[
\operatorname{Child}_v(0)=
\min_{t\in C(v)}
\left(
I_t(1)+\sum_{u\in C(v)\setminus\{t\}} I_u(0)
\right),
\]
and for $1\le a<k$,
\[
\operatorname{Child}_v(a)=
\min_{t\in C(v)}
\left(
I_t(a+1)+\sum_{u\in C(v)\setminus\{t\}} O_u(a-1)
\right).
\]
Moreover, $\operatorname{Child}_v(k)=\infty$.
\end{lemma}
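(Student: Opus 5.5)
The plan is to argue both directions: every admissible assignment in this case decomposes as the formula describes, and every choice of child-states in the formula glues into an admissible assignment on $T_v$. Suppose the unique broadcaster $x$ dominating $v$ lies in $T_t$ for some child $t\in C(v)$, and $f(x)-d_T(x,v)=a$. Since $d_T(x,v)=d_T(x,t)+1$, the strength left at $t$ is $a+1$. The first thing to check is that $x$ is also the broadcaster dominating $t$ from inside $T_t$; this follows from Observation~\ref{obs:through-child-tree-short}. Because $x$ reaches $v\notin T_t$, it dominates $t$, and uniqueness then forces $x$ to be the only broadcaster dominating $t$. The restriction of $f$ to $T_t$ therefore dominates every vertex of $T_t$ exactly once from inside. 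This needs one more fact: no vertex of $T_t$ is dominated by a broadcaster outside $T_t$. Such a broadcaster would pass through $v$ and then reach $t$, so by the same path argument it would dominate $v$ as well, contradicting uniqueness. Hence $f|_{T_t}$ is feasible for $I_t(a+1)$. This already shows $a+1\le k$, which gives $\operatorname{Child}_v(k)=\infty$.

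Next I treat the other children $u\neq t$. There $v$ does not broadcast, since $v$ is dominated only by $x$, so every vertex of $T_u$ is dominated either from inside $T_u$ or by $x$. A vertex $y\in T_u$ is reached by $x$ exactly when $d(x,v)+d(v,y)\le f(x)$, that is, when $d(u,y)\le a-1$.

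For $a\ge 1$, I need to show that no broadcaster inside $T_u$ reaches those vertices and that $f|_{T_u}$ dominates every other vertex exactly once from inside. This is precisely the state $O_u(a-1)$. The remaining point is that no broadcaster in $T_u$ escapes to $v$ or beyond; that holds because $v$ is dominated only by $x$, and Observation~\ref{obs:through-child-tree-short} says any escaping broadcaster would dominate $v$.

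For $a=0$, the broadcaster $x$ stops at $v$ and reaches nothing in $T_u$. So $T_u$ must be fully dominated from inside by a broadcaster that does not reach $v$, i.e.\ with residual $0$ at $u$, giving $I_u(0)$. This is the same reasoning as for $O_v(0)$ in Lemma~\ref{lem:outside-recurrence-tree}.

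Summing costs (with $f(v)=0$) shows that the cost of any such assignment is at least the bracketed expression for its $t$. Minimising over $t$ gives the lower bound.

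For the converse, I fix $t$ together with optimal assignments realising $I_t(a+1)$ and the relevant $I_u(0)$ or $O_u(a-1)$, and take their union with $f(v)=0$. I must check the result is admissible on $T_v$. The vertex $v$ is dominated by the source of $t$ with residual $a$, and only by it, since the other children's states forbid reaching $v$. Vertices in $T_t$ are dominated exactly once, because that broadcast reaches beyond $t$ only through $v$ with residual $a$. Vertices of $T_u$ with $d(u,y)\le a-1$ get exactly the domination from $x$, and the rest are covered from inside $T_u$ only. Broadcasters in different child subtrees cannot overlap except through $v$, which was excluded.

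I expect the main obstacle to be the boundary handling: checking that $I_u(0)$ in the $a=0$ case means ``does not reach $v$'', and that $O_u(a-1)$ correctly forbids escape to $v$. The concern is that $O_u(a-1)$ only constrains vertices of $T_u$. For the escape question I would again use Observation~\ref{obs:through-child-tree-short}: a broadcaster in $T_u$ that reaches $v$ must dominate $u$, and $u$ lies within distance $a-1$ of itself, so the $O_u(a-1)$ condition forbids this.
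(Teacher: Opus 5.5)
Your proposal is correct and follows the same route as the paper. You locate the child $t$ containing the source and note that the residual at $t$ is $a+1$, which rules out $a=k$; you then read off $I_u(0)$ or $O_u(a-1)$ for the other children, according to whether the broadcast stops at $v$ or enters them with residual $a-1$. You are more thorough than the paper's sketch in one respect: you check both the decomposition and the gluing direction, and you use Observation~\ref{obs:through-child-tree-short} to show that $I_u(0)$ and $O_u(a-1)$ prevent any broadcaster in $T_u$ from reaching $v$, which the paper leaves implicit.
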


\begin{proof}
Assume that the broadcaster dominating $v$ lies in the child subtree $T_t$.
Since $t$ is adjacent to $v$, the same broadcast reaches $t$ with one more
unit of strength than it has at $v$. Hence $T_t$ contributes $I_t(a+1)$.

If $a=0$, the broadcast from $T_t$ reaches $v$ but does not enter any other
child subtree. Therefore each $T_u$, $u\neq t$, must dominate its own root
from inside without reaching $v$, which gives the term $I_u(0)$.

If $1\le a<k$, the broadcast also enters every other child subtree $T_u$ with
$a-1$ units of strength at $u$. Thus $T_u$ must leave the vertices within
distance $a-1$ from $u$ to this broadcast, and dominate all remaining
vertices from inside. This gives the term $O_u(a-1)$. Taking the minimum over
the choice of $t$ gives the formula. The case $a=k$ would require the state
$I_t(k+1)$, which is outside the allowed range.
\end{proof}
}

{
\begin{corollary}[Inside recurrence]\label{cor:inside-recurrence-tree}
For every vertex $v$ with $C(v)\neq\emptyset$ and every $a\in [k]_0$,
\[
I_v(a)=\min\{\operatorname{Self}_v(a),\operatorname{Child}_v(a)\}.
\]
\end{corollary}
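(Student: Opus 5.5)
The corollary follows by splitting the feasible assignments counted by $I_v(a)$ according to where the broadcaster dominating $v$ lies, and then using Lemmas~\ref{lem:self-source-tree} and~\ref{lem:child-source-tree}. Fix $v$ with $C(v)\neq\emptyset$ and $a\in[k]_0$. Consider any $k$-limited assignment $f$ on $T_v$ in which every vertex of $T_v$ is dominated exactly once from inside and the broadcaster $x$ dominating $v$ satisfies $f(x)-d_T(x,v)=a$. Then $x$ is unique and lies in $V(T_v)=\{v\}\cup\bigcup_{u\in C(v)}V(T_u)$, where the union is disjoint. So either $x=v$, or $x\in V(T_t)$ for exactly one child $t$. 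This partitions the feasible assignments into a ``self'' class and a ``child'' class. The minimum over the union is the minimum of the class minima, with the convention that the minimum over an empty class is $\infty$.

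The heart of the proof is showing that each class minimum equals the value in the corresponding lemma. The lemmas as stated give one direction: from a feasible assignment in a class, restrict to each $T_u$ to obtain an assignment counted by the indicated state ($O_u(a-1)$, $I_u(0)$, or $I_t(a+1)$). This gives that the class minimum is at least $\operatorname{Self}_v(a)$, respectively $\operatorname{Child}_v(a)$. For equality I also need the reverse direction. I would take optimal child assignments realizing each summand and glue them, setting $f(v)=a$ in the self case and $f(v)=0$ in the child case, and then check that the glued assignment is feasible for $I_v(a)$.

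The feasibility check uses Observation~\ref{obs:through-child-tree-short}. A broadcaster in $T_u$ reaches outside $T_u$ only through $u$ and then $v$. Its residual strength at $v$ is therefore one less than at $u$, and so it is $\le -1$ for the $O_u$ states and the $I_u(0)$ states. Such broadcasters never leave their subtree.

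In the child case, the state $I_t(a+1)$ means the broadcaster reaches $v$ with strength exactly $a$, and it enters each other $T_u$ with strength $a-1$ at $u$. It therefore dominates exactly the vertices of $T_u$ within distance $a-1$ of $u$, which are precisely the vertices that $O_u(a-1)$ leaves undominated. When $a=0$ it enters no other subtree, consistent with the $I_u(0)$ terms.

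In the self case, $v$ with strength $a$ covers exactly the ball of radius $a-1$ around each child $u$, and it covers nothing in $T_t$'s interior beyond what the states allow. The strength bound $a\le\min\{k,\ecc_T(v)\}$ is inherited from the lemma.

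The main obstacle is the exact-once bookkeeping in this converse gluing. I must verify that every vertex of $T_v$ is covered exactly once, by checking each vertex against the three possible sources: its own subtree, the designated broadcaster, and other subtrees. The third source is excluded by the residual-strength observation above, so this reduces to the distance computation $d_T(x,y)=d_T(x,v)+1+d_T(u,y)$ for $y\in T_u$ and $x\notin T_u$. Once both inequalities hold in each class, the recurrence $I_v(a)=\min\{\operatorname{Self}_v(a),\operatorname{Child}_v(a)\}$ follows, including the case where both sides are $\infty$.
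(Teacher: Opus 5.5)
Your proposal is correct and follows the paper's own argument. You split the assignments counted by $I_v(a)$ according to whether the unique broadcaster dominating $v$ is $v$ itself or lies in exactly one child subtree $T_t$, and take the smaller of the two case minima given by Lemmas~\ref{lem:self-source-tree} and~\ref{lem:child-source-tree}. The paper takes those lemmas as already giving the exact case minima. You additionally spell out the gluing (achievability) direction via Observation~\ref{obs:through-child-tree-short}, which is added rigor rather than a different route. In the child case of that gluing, note also that under $I_t(a+1)$ only $x$ dominates $t$, so no other broadcaster of $T_t$ reaches outside $T_t$.
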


\begin{proof}
In the state $I_v(a)$, the vertex $v$ is dominated by exactly one
broadcaster. This broadcaster is either $v$ itself or lies in exactly one child
subtree. Lemmas~\ref{lem:self-source-tree} and \ref{lem:child-source-tree}
compute the minimum cost in these two cases, hence the result follows.
\end{proof}
}

\subsection{The dynamic programming procedure}
\headinggap
The recurrence formulas above lead to a bottom-up algorithm. Root the
tree $T$ at a vertex $r$, and process the vertices in postorder. We store the computed values $I_v(a)$ and $O_v(a)$, for
$v\in V(T)$ and $a\in [k]_0$, in a dynamic programming table.

\medskip
\noindent
\emph{Algorithm: $\mathrm{DP\text{-}Tree}(T,k)$.}

\noindent
\emph{Input:} A tree $T$ and a fixed integer $k\ge 1$.

\noindent
\emph{Output:} The value $\gamma_{ebk}(T)$, or $\infty$ if $T$ admits no
$k$-ELDB.

\begin{enumerate}
    \item Root $T$ at an arbitrary vertex $r$, and list the vertices of $T$ in
    postorder.

    \item For each leaf $v$, initialize $I_v(a)$ and $O_v(a)$ for all
    $a\in [k]_0$ using Lemma~\ref{lem:leaf-init-tree}.

    \item For each non-leaf vertex $v$, compute $O_v(a)$ for all $a\in [k]_0$ using
    Lemma~\ref{lem:outside-recurrence-tree}.

    \item Compute $\operatorname{Self}_v(a)$ and
    $\operatorname{Child}_v(a)$ for all $a\in [k]_0$ using
    Lemmas~\ref{lem:self-source-tree} and~\ref{lem:child-source-tree}, and set
    \[
    I_v(a)=\min\{\operatorname{Self}_v(a),\operatorname{Child}_v(a)\}.
    \]

\item Return $\min\{I_r(a):a\in [k]_0\}$. If this value is finite, then it equals $\gamma_{ebk}(T)$; if it is $\infty$, then $T$ admits no $k$-ELDB.
    
\end{enumerate}

{
\begin{theorem}\label{thm:dp-correct-complexity}
Let $T$ be a tree on $n$ vertices, and let $k\ge 1$ be fixed. The algorithm
$\mathrm{DP\text{-}Tree}(T,k)$ computes $\gamma_{ebk}(T)$. In particular,
it decides whether $T$ admits a $k$-ELDB. Moreover, the algorithm runs in
$O(nk)$ time.
\end{theorem}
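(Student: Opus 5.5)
The proof splits into correctness and running time. For correctness we argue by induction on the postorder: for every vertex $v$ and every $a\in[k]_0$, the value the algorithm stores for $I_v(a)$ and $O_v(a)$ equals the quantity in the definitions. Lemma~\ref{lem:leaf-init-tree} supplies the base case at the leaves. For an internal vertex $v$, all children are processed before $v$, so by the induction hypothesis their entries are correct. Lemma~\ref{lem:outside-recurrence-tree}, Lemmas~\ref{lem:self-source-tree} and~\ref{lem:child-source-tree}, and Corollary~\ref{cor:inside-recurrence-tree} then give correct values at $v$.

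To justify applying these lemmas, we check both inequalities of each recurrence, since the lemma proofs mainly describe the decomposition.

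\emph{Restriction ($\ge$).} Take an optimal assignment realizing a state at $v$ and restrict it to each child subtree $T_u$. By Observation~\ref{obs:through-child-tree-short}, any broadcaster in $T_u$ that reaches outside $T_u$ also reaches $u$. Hence the restriction realizes exactly one of the states $I_u(\cdot)$ or $O_u(\cdot)$, with the parameter the recurrence prescribes. Costs are additive over the disjoint subtrees plus $f(v)$, so the cost at $v$ is at least the recurrence value.

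\emph{Gluing ($\le$).} Take optimal child assignments, combine them, and add the prescribed value of $f(v)$. We must check that the result is efficient on $T_v$. A vertex $y\in T_u$ is reached by a broadcaster outside $T_u$ only through $u$, and the remaining strength at $y$ is determined by the remaining strength at $u$. The $O_u(a-1)$ condition therefore says precisely that $T_u$ leaves undominated exactly the vertices that the external broadcast (from $v$ or from the sibling subtree $T_t$) covers. Likewise, the $I_u(0)$ condition guarantees that no broadcast leaves $T_u$.

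One subtlety needs attention here. A leaf may broadcast with strength up to $\ecc_T(v)$ in the whole tree, and inside states of deep subtrees may carry remaining strength that ultimately reaches beyond $T_v$. This is exactly what the parameter $a$ records, so the equivalence must be phrased relative to broadcasts whose reach outside $T_v$ is accounted for by $a$. I expect this gluing direction, handling exactness across sibling subtrees, to be the main obstacle. It requires a careful distance computation: $d(x,y)=d(x,v)+d(v,y)$ for $x,y$ in different child subtrees.

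At the root, the states $O_r(a)$ would require domination from outside, which is impossible. So a $k$-ELDB of $T$ corresponds exactly to some state $I_r(a)$, which gives $\gamma_{ebk}(T)=\min_a I_r(a)$. Decidability follows because the minimum is finite iff a $k$-ELDB exists.

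For the running time, rooting the tree and computing the postorder take $O(n)$. The eccentricities $\ecc_T(v)$ for all $v$ can be computed in $O(n)$ with two BFS passes (find a diameter path) or with a standard rerooting pass. At a vertex $v$ with $d_v=|C(v)|$ children:
\begin{itemize}
\item $O_v(a)$ and $\operatorname{Self}_v(a)$ cost $O(d_v)$ each per value of $a$.
\item For $\operatorname{Child}_v(a)$, a naive minimum over $t$ would cost $O(d_v^2)$. Instead, precompute $S_0=\sum_u I_u(0)$ and $S_a=\sum_u O_u(a-1)$, and evaluate each candidate as $I_t(\cdot)+S-(\text{term of }t)$, taking care with $\infty$ arithmetic. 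This gives $O(d_v)$ per $a$.
\end{itemize}
Summing $O(k\,d_v)$ over all $v$ gives $O(k\sum_v d_v)=O(nk)$.
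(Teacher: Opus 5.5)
Your proposal is correct and follows the paper's proof. Both argue by postorder induction from the leaf initialization through the outside and inside recurrences, take $\gamma_{ebk}(T)=\min_a I_r(a)$ at the root, and bound the work at each vertex by $O(k|C(v)|)$ using precomputed child sums. Your two-sided restriction/gluing check of the recurrences, the $O(n)$ eccentricity precomputation and the care with $\infty$ arithmetic are details the paper leaves implicit, not a different route.
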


\begin{proof}
The correctness follows from the postorder computation. The values at the leaves
are initialized by Lemma~\ref{lem:leaf-init-tree}. Now suppose that
$v$ is a vertex whose children have already been processed, then the outside states
$O_v(a)$ are given by Lemma~\ref{lem:outside-recurrence-tree}. For the inside
states, the broadcaster that dominates $v$ is either $v$ itself or lies in
exactly one child subtree. These two cases are precisely the contributions
$\operatorname{Self}_v(a)$ and $\operatorname{Child}_v(a)$, and hence
Corollary~\ref{cor:inside-recurrence-tree} gives the correct value of $I_v(a)$.
 Thus, by induction along the postorder
traversal, the values $I_v(a)$ and $O_v(a)$ stored in the dynamic programming
table are computed correctly for every vertex $v$ and every $a\in [k]_0$.

At the root $r$, nothing can be left to vertices outside the tree. Therefore
\[
\gamma_{ebk}(T)=\min\{I_r(a):a\in [k]_0\}.
\]
If this minimum is finite, then $T$ admits a $k$-ELDB; if it is infinite, no
$k$-ELDB exists.

It remains to estimate the running time. For each vertex $v$, precompute
\[
S_v^I=\sum_{u\in C(v)}I_u(0)
\quad\text{and}\quad
S_v^O(b)=\sum_{u\in C(v)}O_u(b),\qquad 0\le b\le k-1.
\]
Using these sums, all values $O_v(a)$, $\operatorname{Self}_v(a)$, and
$\operatorname{Child}_v(a)$ can be computed by scanning the child list of $v$
for each $a\in [k]_0$. Hence the work at $v$ is $O(k|C(v)|)$. Since
\[
\sum_{v\in V(T)} |C(v)|=n-1,
\]
the total running time is $O(nk)$.
\end{proof}
}

\begin{corollary}
The parameter $\mcr(T)$ can be computed in polynomial time for every tree $T$.
\end{corollary}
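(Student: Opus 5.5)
The plan is to reduce the computation of $\mcr(T)$ to a bounded number of calls of $\mathrm{DP\text{-}Tree}(T,k)$, using Theorem~\ref{thm:dp-correct-complexity} as a black box. Two facts are needed. First, $\mcr(T)$ always exists and is bounded by a polynomial in $n=|V(T)|$. Second, for each candidate $k$ in this range the decision ``does $T$ admit a $k$-ELDB?'' can be answered in time polynomial in $n$ even though $k$ is now part of the input rather than fixed.

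For the upper bound, take a central vertex $c$ of $T$ and set $f(c)=\ecc_T(c)=\operatorname{rad}(T)\le n-1$, with $f(x)=0$ for all $x\neq c$. Every vertex is then within distance $f(c)$ of $c$, and $c$ is the only broadcaster, so every vertex is dominated exactly once. Hence $f$ is a $k$-ELDB for every $k\ge \operatorname{rad}(T)$, and $1\le \mcr(T)\le \operatorname{rad}(T)\le n-1$. This also shows that the set of admissible $k$ is upward closed: any $k$-ELDB is a $k'$-ELDB for $k'\ge k$, since the constraint $f\le k$ only weakens.

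Next, for each $k=1,2,\dots,n-1$ in increasing order, run $\mathrm{DP\text{-}Tree}(T,k)$ and output the first $k$ for which the returned value $\min_a I_r(a)$ is finite. By Theorem~\ref{thm:dp-correct-complexity} this value is finite exactly when $T$ admits a $k$-ELDB, so the first such $k$ is $\mcr(T)$. Upward closedness also allows binary search over $[1,n-1]$, but a linear scan already suffices.

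The main point to check is that Theorem~\ref{thm:dp-correct-complexity}, although stated for fixed $k$, gives a bound of $O(nk)$ arithmetic operations uniformly in $k$. Nothing in the recurrences (Lemmas~\ref{lem:leaf-init-tree}--\ref{lem:child-source-tree}) depends on $k$ being constant; $k$ enters only as the table size $[k]_0$. All entries are integers bounded by $nk$ or equal to $\infty$, so the arithmetic is on polynomially bounded numbers. With $k\le n-1$, each call costs $O(n^2)$. Over at most $n-1$ calls the total is $O(n^3)$, or $O(n^2\log n)$ with binary search, which is polynomial in $n$.
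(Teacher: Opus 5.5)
Your proposal is correct and follows the paper's proof: run $\mathrm{DP\text{-}Tree}(T,q)$ for increasing $q$ and return the first $q$ that gives a finite value, with polynomial total cost coming from the $O(nq)$ bound of Theorem~\ref{thm:dp-correct-complexity}. You also make explicit two points the paper uses only implicitly: that broadcasting $\operatorname{rad}(T)$ from a central vertex is an efficient broadcast, so $\mcr(T)\le\operatorname{rad}(T)$ and the scan terminates, and that the $O(nk)$ bound holds uniformly in $k$.
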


\begin{proof}
Run $\mathrm{DP\text{-}Tree}(T,q)$ for each $q$, where
$q \in \{ 1,2,\dots,\operatorname{rad}(T)$\}, and choose the least value of $q$ for
which the algorithm returns a finite value. By Theorem~\ref{thm:dp-correct-complexity},
this happens exactly when $T$ admits a $q$-ELDB. Hence the first such value is
$\mcr(T)$.

Since the run for a fixed $q$ takes $O(nq)$ time, the total running time is
\[
\sum_{q=1}^{\operatorname{rad}(T)}O(nq)
=
O(n\operatorname{rad}(T)^2),
\]
which is polynomial in $n$. 
\end{proof}

\subsection{Reconstruction of an optimal broadcast}
\headinggap
{
The dynamic program may be augmented to return an optimal broadcast. Whenever a
minimum is computed in one of the recurrence formulas, we retain one choice at
which the minimum is attained. Thus, for an outside state, the corresponding
child states are determined by Lemma~\ref{lem:outside-recurrence-tree}. For an
inside state $I_v(a)$, we retain whether the minimum is obtained from
$\operatorname{Self}_v(a)$ or from $\operatorname{Child}_v(a)$. In the former
case, the vertex $v$ is assigned broadcast value $a$; in the latter case, we
retain one child $t\in C(v)$ attaining the minimum in
Lemma~\ref{lem:child-source-tree}.
}

{
After the dynamic program has been completed, choose $a^\ast\in [k]_0$ such
that
\[
I_r(a^\ast)=\min\{I_r(a):a\in [k]_0\}.
\]
Tracing the retained choices from the state $I_r(a^\ast)$ down the rooted tree
produces the broadcasting vertices and their assigned values.
}

\begin{theorem}\label{thm:reconstruction}
{
If $T$ admits a $k$-ELDB, then an optimal efficient $k$-limited dominating
broadcast of $T$ can be reconstructed in linear time after the values
$I_v(a)$ and $O_v(a)$ have been computed.
}
\end{theorem}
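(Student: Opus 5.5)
The plan is a top-down traceback. We maintain a set of pairs $(v,\sigma)$, where $\sigma$ is a state $I(a)$ or $O(a)$ assigned to vertex $v$, and at each step we expand the state of $v$ into states of its children using the retained choices. We start from the pair $(r, I(a^\ast))$, with $a^\ast$ chosen as in the text. Since $T$ admits a $k$-ELDB, Theorem~\ref{thm:dp-correct-complexity} gives $I_r(a^\ast)=\gamma_{ebk}(T)<\infty$.

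The expansion rules are read off from the recurrences. By Lemma~\ref{lem:outside-recurrence-tree}, a state $O(0)$ at an internal $v$ sends $I(0)$ to every child, and $O(a)$ with $a\ge1$ sends $O(a-1)$ to every child. For a state $I(a)$ at $v$ we use the retained flag. If it is Self, we set $f(v)=a$ and send $O(a-1)$ to every child (Lemma~\ref{lem:self-source-tree}). If it is Child with retained child $t$, we set $f(v)=0$, send $I(a+1)$ to $t$, and send $I(0)$ (if $a=0$) or $O(a-1)$ (if $a\ge1$) to the other children (Lemma~\ref{lem:child-source-tree}). At a leaf, a state $I(a)$ sets $f(v)=a$ and a state $O(a)$ sets $f(v)=0$, by Lemma~\ref{lem:leaf-init-tree}. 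Every vertex not assigned a positive value receives $f=0$.

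Correctness goes by induction on the height of $v$, with the following invariant. Whenever $(v,\sigma)$ is processed with finite table value $\sigma_v$, the values $f$ assigned inside $T_v$ form an assignment that realizes the defining conditions of $\sigma$ and has cost exactly $\sigma_v$.
\begin{itemize}
\item \emph{Finiteness.} The table value is finite at the root. Each recurrence writes a finite entry as $a$, or $0$, plus a sum of child entries, so every child state produced is finite as well.
\item \emph{Cost.} The cost contributed at $v$ plus, by induction, the child costs equals the recurrence expression at the retained minimizer. That expression is $\sigma_v$.
\item \emph{Conditions.} The local compatibility is exactly what the proofs of Lemmas~\ref{lem:outside-recurrence-tree}--\ref{lem:child-source-tree} argue, read in the constructive direction. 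Glueing child assignments that satisfy the prescribed child states, together with the value at $v$, yields an assignment satisfying the parent state. Observation~\ref{obs:through-child-tree-short} guarantees that no broadcast from one child subtree reaches another child subtree except through $v$.
\end{itemize}
Applied at the root, the invariant shows that $f$ is a $k$-limited broadcast in which every vertex is dominated exactly once, with cost $\gamma_{ebk}(T)$. Hence $f$ is an optimal $k$-ELDB.

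The main obstacle is the constructive direction of the local lemmas, specifically in the Child case. We must check that the broadcaster realizing $I_t(a+1)$ inside $T_t$ arrives at $v$ with strength $a$. We must also check that it dominates precisely the vertices within distance $a-1$ of each sibling $u$, which the sibling's state $O_u(a-1)$ has left undominated, so that nothing is dominated twice or missed. In addition, for $a=0$ the broadcast stops at $v$ and each sibling state $I_u(0)$ guarantees that no broadcast from $T_u$ reaches $v$. I would verify these cases directly from the definitions of $I$ and $O$ using distances through $v$.

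For the running time, each vertex is expanded exactly once. The retained data give $O(1)$ work per child: the flag, the choice of $t$, and a constant-time rule per child. Each vertex is therefore handled in $O(1+|C(v)|)$ time. Summing over all vertices gives $O(n)$, which is linear.
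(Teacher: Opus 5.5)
Your proposal is correct and follows the paper's route. Like the paper, you trace the retained minimizers top-down from a minimizing root state $I_r(a^\ast)$, each vertex receives exactly one state from its parent, and the work is $O(1+|C(v)|)$ per vertex, which gives linear time. Your explicit invariant, proved by induction on height, spells out the constructive direction of the local recurrences that the paper's proof only asserts when it says the traced broadcast ``realizes the same choices.'' The deferred Child-case check goes through exactly as you outline, using the residual $a+1$ at $t$, the residual $a-1$ at each sibling, and the fact that $t$ is dominated only once.
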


\begin{proof}
{
The retained choices specify, for each visited state, the corresponding states in
the child subtrees and, when applicable, the broadcast value assigned to the
current vertex. Starting from a minimum root state $I_r(a^\ast)$, this recursive
tracing follows the rooted-tree structure and visits each vertex at most once.
The resulting broadcast realizes the same choices used in the computation of
$\gamma_{ebk}(T)$, and hence has minimum cost. Therefore it is an optimal
$k$-ELDB. The reconstruction time is linear in $|V(T)|$.
}
\end{proof}

\section{Computational complexity}
\headinggap
We now turn to the complexity of the decision problem. For a fixed integer
$k\ge 1$, we consider the following decision problem.

\medskip
\noindent
\emph{The Efficient $k$-limited broadcast domination ($k$-ELDB) problem}

\noindent
\emph{Instance:} A graph $G$.

\noindent
\emph{Question:} Does $G$ admit an efficient $k$-limited dominating broadcast?

\medskip

The case $k=1$ is already familiar: an efficient $1$-limited dominating
broadcast is nothing but an efficient dominating set. Thus, the problem for
$k=1$ coincides with efficient domination, which is known to be NP-complete on
arbitrary graphs \cite{bange1988,haynes1998}. Our aim here is to show that the same
hardness persists for every fixed $k$, where $k\ge 2$.

\subsection{The truth gadget $T_k$}
\headinggap
\medskip
\noindent
We begin by describing the variable gadget that will be used in the reduction.
A vertex that is adjacent to a leaf vertex (vertex of degree $1$) is said to be a \emph{support vertex}.

\begin{observation}\label{obs_oTk_final}
If $f$ is an efficient dominating broadcast on a graph $G$, then no support
vertex of $G$ can lie at distance exactly $f(v)$ from a broadcasting vertex
$v$.
\end{observation}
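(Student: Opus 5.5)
We argue by contradiction. Let $f$ be an efficient dominating broadcast on $G$, let $v$ be a broadcaster, and suppose some support vertex $s$ satisfies $d(v,s)=f(v)$. Let $\ell$ be a leaf adjacent to $s$. Since $\ell$ has degree one, every path from $\ell$ to any other vertex passes through $s$. In particular, if $v\neq \ell$, then $d(v,\ell)=d(v,s)+1=f(v)+1$. (When $v=s$ and $f(v)\ge 1$ we would have $d(v,s)=0\neq f(v)$, so this case does not arise. The case $v=\ell$ is treated below.) Hence $\ell$ is not dominated by $v$. Because $f$ is dominating, $\ell$ must be dominated by some other broadcaster $w\neq v$, so that $d(w,\ell)\le f(w)$.

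We split into two cases according to whether $w=\ell$.

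\emph{Case $w\neq\ell$.} Here $d(w,s)=d(w,\ell)-1\le f(w)-1<f(w)$, so $w$ dominates $s$. But $v$ also dominates $s$, because $d(v,s)=f(v)$. Thus $s$ is dominated twice, which contradicts efficiency.

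\emph{Case $w=\ell$.} Here $f(\ell)\ge 1$ and $d(\ell,s)=1\le f(\ell)$, so again $\ell$ dominates $s$. Together with $v$ this gives two distinct broadcasters dominating $s$, again a contradiction.

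It remains to handle $v=\ell$. Then $d(\ell,s)=1=f(\ell)$, so $\ell$ broadcasts with strength exactly one. This is allowed, and a leaf broadcasting with strength one is compatible with efficiency. The observation should therefore be read with $v$ ranging over broadcasters not equal to the leaf attached to the support vertex in question. Equivalently, the argument requires the leaf adjacent to $s$ to be undominated by $v$, which holds whenever $v\neq\ell$. In the setting of the gadget, where $G$ has additional structure such as several leaves at $s$ or $s$ of degree at least two with the broadcast viewed from the other side, this degenerate case is excluded by choosing $\ell$ to be a leaf at $s$ different from $v$. If $s$ has at least two leaf neighbours, such a choice is always possible.

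The main obstacle is precisely this degenerate case. The core argument itself is just the one-line distance calculation $d(w,s)<f(w)$ through the unique neighbour $s$ of the leaf.
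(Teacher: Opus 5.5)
The paper states this observation without proof, so there is no argument of its own to compare with. Your core argument is the natural one, and it is correct whenever $s$ has a leaf neighbour $\ell\neq v$. In that case $d(v,\ell)=f(v)+1$, so $\ell$ is dominated by some broadcaster $w\neq v$. That $w$ also dominates $s$: use $d(w,s)=d(w,\ell)-1$ if $w\neq\ell$, and $f(\ell)\ge 1$ if $w=\ell$. Hence $s$ is dominated by both $v$ and $w$, contradicting efficiency.

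You are right that the statement as printed fails in the remaining case. Say so outright with an explicit counterexample, rather than the hedge that it ``should be read'' differently.
\begin{itemize}
\item The paper's own $T_1\cong K_2$ with $f(u)=1$ is already a counterexample: the support vertex $\overline{u}$ lies at distance $1=f(u)$ from the broadcaster $u$.
\item Similarly, on the path $abcd$ the broadcast $f(a)=f(d)=1$ is efficient, yet the support vertex $b$ lies at distance $f(a)$ from $a$.
\end{itemize}
The correct statement is: $d(v,s)\neq f(v)$ for every support vertex $s$ having a leaf neighbour different from $v$. The only exception is $f(v)=1$ with $v$ the unique leaf adjacent to $s$, and your proof establishes exactly this corrected version.

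Your last paragraph is wrong, however: the gadget does not rule out the exceptional case.
\begin{itemize}
\item For $k\ge 3$, each of $x_0,\dots,x_{k-3}$ (and $y_0,\dots,y_{k-3}$) in $T_k$ has exactly one leaf neighbour, namely its pendant $p$.
\item The only vertex at distance $1$ from $p$ is that support vertex.
\item So ``$p$ broadcasts with strength $1$'' is precisely the configuration the corrected observation does not exclude.
\end{itemize}
Consequently, the appeal to the observation in Lemmas~\ref{lem:Tk-unique-final} and~\ref{lem:augmented-gadget-final} leaves this case open, and it needs a separate argument. One option is a propagation argument:
\begin{itemize}
\item Suppose the pendant at $x_j$ ($j\le k-3$) broadcasts with strength $1$, so $x_j$ is already dominated.
\item Then the pendant at $x_{j+1}$ can only be dominated by itself with strength exactly $1$. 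Any other dominator reaches $x_{j+1}$ with leftover strength at least $1$, and a larger strength from the pendant itself also reaches $x_j$; either way $x_j$ would be dominated twice.
\item Iterating, the pendant at $x_{k-2}$ broadcasts with strength $1$. But $x_{k-2}$ also carries the second leaf $x_{k-1}$, so your corrected observation gives a contradiction.
\end{itemize}
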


Next if, for each $k\ge 1$, we constructively define a bicentral tree $T_k$ as follows.

\begin{construction}\label{cons:Tk_final} (Refer Figure \ref{fig:Tk_final})
For $k=1$, let $T_1\cong K_2$, with vertices $u$ and $\overline{u}$.
For $k\ge 2$, start with the edge $u\overline{u}$. Set
$x_0=u$ and $y_0=\overline{u}$.  Attach a path of length
$k-1$ at $u$ by adding vertices $x_1,\dots,x_{k-1}$ and edges
$ux_1,x_1x_2,\dots,x_{k-2}x_{k-1}$. Similarly, attach a path of length
$k-1$ at $\overline{u}$ by adding vertices $y_1,\dots,y_{k-1}$ and edges
$\overline{u}y_1,y_1y_2,\dots,y_{k-2}y_{k-1}$. Finally, for each $j$, $j \in \{0,1,\dots,k-2\}$,
attach one pendant leaf to $x_j$ and one pendant leaf to $y_j$.
\end{construction}

Thus, $u$ and $\overline{u}$ are the two central vertices of the gadget, while
$x_{k-1}$ and $y_{k-1}$ are the terminal leaves at the ends of the two main
branches.

\begin{figure}
    \centering
    \includegraphics[scale=0.27]{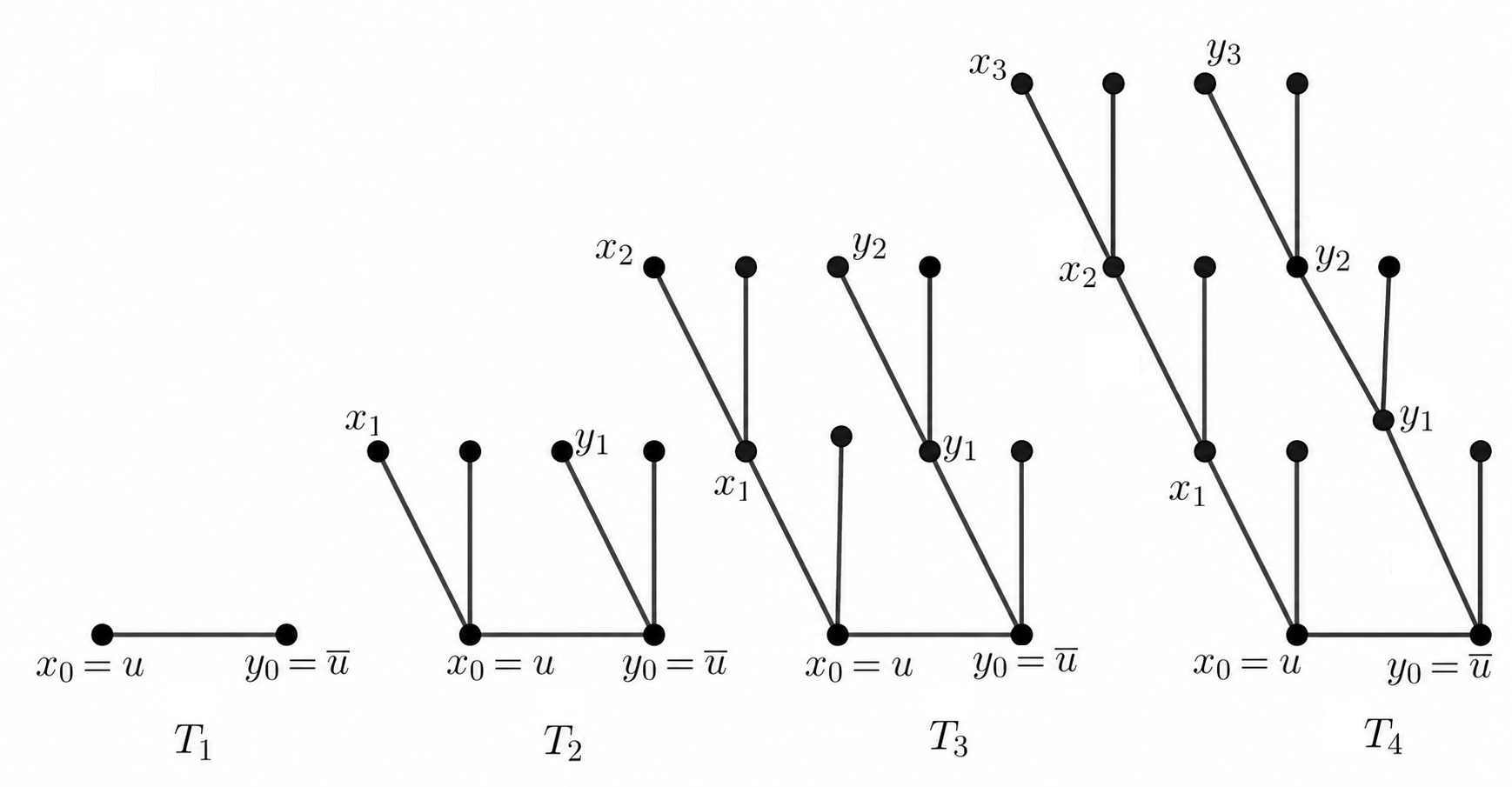}
    \caption{The truth gadget $T_k$.}
    \label{fig:Tk_final}
\end{figure}

The next lemma records the $k$-ELDBs of $T_k$ which will be used in the
reduction.

\begin{lemma}\label{lem:Tk-unique-final}
The graph $T_k$ admits exactly two efficient $k$-limited dominating
broadcasts. In each of them, exactly one of the two central vertices $u$ and
$\overline{u}$ receives the value $k$, and every other vertex receives the
value $0$.
\end{lemma}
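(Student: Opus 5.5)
The plan is to exploit a structural property of $T_k$: every vertex is either a leaf or a support vertex. Combined with Observation~\ref{obs_oTk_final}, this forces every broadcast ball in an efficient broadcast to be the whole tree. The case $k=1$ is immediate: $T_1\cong K_2$ is dominated exactly once precisely when one endpoint broadcasts with strength $1$ and the other has value $0$. So assume $k\ge 2$.

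\textbf{Step 1 (existence).} Compute distances from $u$ in Construction~\ref{cons:Tk_final}. The farthest vertex on the $x$-side is $x_{k-1}$, at distance $k-1$. The pendant leaf at $x_j$ is at distance $j+1\le k-1$. On the $y$-side, $y_{k-1}$ is at distance $k$, and the pendant leaf at $y_j$ is at distance $j+2\le k$. Hence $\ecc(u)=k$, and the broadcast with $f(u)=k$ and $f\equiv 0$ elsewhere dominates every vertex exactly once. By the symmetry $u\leftrightarrow\overline{u}$, the same holds for $\overline{u}$.

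\textbf{Step 2 (every ball is the whole tree).} Let $f$ be any efficient $k$-limited dominating broadcast and let $w$ be a broadcaster with $f(w)=s\ge 1$. Suppose the ball $B=\{z:d(w,z)\le s\}$ is not all of $V(T_k)$. Since $T_k$ is connected, some $z\in B$ has a neighbour outside $B$. In a tree that neighbour lies at distance $d(w,z)+1$ from $w$, so $d(w,z)=s$. A leaf at distance $s\ge1$ from $w$ has its unique neighbour on the $w$--$z$ path, which lies inside $B$; hence $z$ is not a leaf.

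Next, check from the construction that every non-leaf vertex of $T_k$ is a support vertex. The non-leaf vertices are exactly $x_0,\dots,x_{k-2}$ and $y_0,\dots,y_{k-2}$, and each carries a pendant leaf. So $z$ is a support vertex at distance exactly $f(w)$ from $w$, contradicting Observation~\ref{obs_oTk_final}. Therefore $B=V(T_k)$. By efficiency there is then exactly one broadcaster $w$, and it satisfies $\ecc(w)\le f(w)\le k$.

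\textbf{Step 3 (locating $w$).} The spine $x_{k-1}\dots x_1u\overline{u}y_1\dots y_{k-1}$ has length $2k-1$, and no pendant leaf lengthens a longest path. Hence $\operatorname{diam}(T_k)=2k-1$, and the only vertices with eccentricity at most $k$ are the two centres $u$ and $\overline{u}$. Any other vertex on the spine, or any pendant leaf, is at distance at least $k+1$ from the far terminal leaf. Thus $w\in\{u,\overline{u}\}$. Since $\ecc(w)=k$, we get $k\le f(w)\le k$, so $f(w)=k$ and every other vertex has value $0$. Together with Step 1, this gives exactly two $k$-ELDBs.

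The main obstacle is Step 2, specifically the check that $T_k$ has no non-leaf, non-support vertex. This requires confirming that the pendant leaves are attached at all of $x_0,\dots,x_{k-2}$ and $y_0,\dots,y_{k-2}$, and that $x_{k-1}$ and $y_{k-1}$ are leaves. Once that is in place, the rest is a routine eccentricity computation.
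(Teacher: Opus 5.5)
Your route differs from the paper's. The paper rules out each noncentral vertex as a broadcaster separately: it asserts from the construction that every such vertex has a support vertex at each distance $1,\dots,k$. It then chooses between the adjacent centres $u,\overline{u}$, with strength $k$ forced by the far terminal leaf. You argue globally instead. Since every non-leaf vertex of $T_k$ is a support vertex, no broadcast ball can have a boundary, so a single broadcaster covers the whole tree, and you locate it by eccentricity. Your structural check is easier to verify than the paper's per-vertex distance claim, and it explains directly why only one broadcaster is possible. Steps 1 and 3 are fine.

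Step 2 has a hole, and the paper's proof has the same one. Observation~\ref{obs_oTk_final} is false as literally stated when the broadcaster is itself the only leaf at the support vertex. For example, in $K_2=T_1$ with $f(u)=1$, or in the path $P_4=abcd$ with $f(a)=f(d)=1$, a support vertex lies at distance exactly $f$ from a broadcaster. What is true is this: if $d(w,z)=f(w)$ and $z$ has a leaf neighbour $\ell\neq w$, then $d(w,\ell)=f(w)+1$, and whichever broadcaster dominates $\ell$ also dominates $z$. For $k\ge 3$ the exceptional case arises in your Step 2. Let $w$ be the pendant leaf at $x_j$ with $0\le j\le k-3$ and $f(w)=1$. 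Then $B=\{w,x_j\}$ and $z=x_j$, and $w$ is the only leaf adjacent to $x_j$, so nothing at $z$ gives a contradiction. The paper's claim about a support vertex at every distance fails to exclude exactly these pendant leaves at strength $1$.

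The repair is short: run Step 2 for the specific broadcaster $w$ that dominates $x_{k-2}$.
\begin{itemize}
\item The vertex $x_{k-2}$ has two leaf neighbours, $x_{k-1}$ and its pendant leaf. So the corrected observation gives $d(w,x_{k-2})<f(w)$.
\item Hence $w$ is not a leaf with $f(w)=1$, since such a $w$ would have to equal $x_{k-2}$, which is not a leaf.
\item So either $w$ is not a leaf, or $d(w,z)=f(w)\ge 2$. In both cases, every boundary vertex $z$ of $B(w)$ is a support vertex with a leaf neighbour other than $w$.
\end{itemize}
Your argument then gives $B(w)=V(T_k)$, so $w$ is the unique broadcaster, and Step 3 goes through unchanged.
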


\begin{proof}
For $k=1$, the graph $T_1\cong K_2$, and the result is immediate. Assume
$k\ge 2$. Assigning value $k$ to $u$, and $0$ to all other vertices, dominates all of $T_k$ exactly once; the same is true symmetrically for $\overline{u}$. Thus
the two stated broadcasts are $k$-ELDBs. Next, it remains to see that no other broadcast is possible. By the construction of $T_k$, every noncentral vertex has a support vertex at each possible positive
broadcast distance from it. Hence, by Observation~\ref{obs_oTk_final}, no
noncentral vertex can be a broadcasting vertex.

Therefore exactly one of $u$ and $\overline{u}$ is chosen as a broadcasting vertex and each forms a $k$-ELDB of $T-k$. Further, the strength of the chosen broadcasting vertex must be $k$ so as to dominate leaf vertex at a distance $k$ in the opposite branch. Hence, $T_k$ has exactly two efficient $k$-limited dominating broadcasts. 

\end{proof}

\begin{corollary}\label{cor:all-mcr-values}
For every integer $k\ge 1$, there exists a graph $G$ with $\mcr(G)=k$.
\end{corollary}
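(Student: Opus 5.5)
The natural witness is the truth gadget $T_k$ of Construction~\ref{cons:Tk_final}; I will show that $\mcr(T_k)=k$. For $k=1$ this is immediate, since $T_1\cong K_2$ is efficiently dominated by either vertex, so $\mcr(T_1)=1$. Assume from now on that $k\ge 2$. By Lemma~\ref{lem:Tk-unique-final}, $T_k$ admits a $k$-ELDB, namely $f(u)=k$ with all other values $0$. Hence $\mcr(T_k)\le k$.

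For the reverse inequality, fix $q$ with $1\le q<k$. I will show that $T_k$ has no $q$-ELDB. Since every $j$-limited broadcast is also $q$-limited for $j\le q$, this gives $\mcr(T_k)\ge k$.

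Suppose $f$ is a $q$-ELDB of $T_k$, and let $v$ be any broadcaster. The argument of Lemma~\ref{lem:Tk-unique-final} rules out noncentral broadcasters by applying Observation~\ref{obs_oTk_final}. The point to check is that this argument never used the bound $k$. The claim is that for every noncentral vertex $v$ and every strength $1\le s\le k$, some support vertex lies at distance exactly $s$ from $v$. The support vertices are $x_0,\dots,x_{k-2}$ and $y_0,\dots,y_{k-2}$. They form a path of $2k-2$ consecutive vertices $x_{k-2},\dots,x_0,y_0,\dots,y_{k-2}$. I will verify the claim separately for the three types of noncentral vertex: the $x_j$ or $y_j$ with $j\ge 1$, the pendant leaves, and the terminal leaves $x_{k-1},y_{k-1}$. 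In each case, the distances from $v$ to the support vertices cover every value in $\{1,\dots,k\}$, because they run along a long enough stretch of that path. This verification is the step I expect to need the most care, especially for vertices near the centre whose distances to the path split into two branches. If it holds, every broadcaster of $f$ lies in $\{u,\overline{u}\}$.

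Next I use the pendant leaves. Both $u$ and $\overline{u}$ have pendant leaves, since $0\le k-2$. If both were broadcasters, $u$ would already be dominated by $\overline u$, a contradiction; so exactly one of them, say $u$, broadcasts, with strength $s\le q\le k-1$. The terminal leaf $y_{k-1}$ satisfies $d(u,y_{k-1})=1+(k-1)=k>s$. So $y_{k-1}$ is not dominated, a contradiction. Symmetrically, if $\overline{u}$ is the broadcaster then $x_{k-1}$ is undominated. Therefore $T_k$ has no $q$-ELDB for $q<k$, and $\mcr(T_k)=k$.
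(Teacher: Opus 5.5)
Your witness and the shape of your lower-bound argument match the paper's, but the paper re-runs nothing. A $q$-ELDB with $q<k$ is itself a $k$-ELDB in which no vertex has value $k$, so Lemma~\ref{lem:Tk-unique-final} excludes it at once. Re-checking that the lemma's argument ``never used the bound $k$'' is therefore unnecessary.

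Since you did re-run it, there is a real hole. Your inference is ``some support vertex lies at distance exactly $s$ from $v$, hence $v$ cannot broadcast with strength $s$.'' This is not valid, and Observation~\ref{obs_oTk_final} is false in that literal form. In the path with vertices $a,b,c,d$ in order, $f(a)=f(d)=1$ (others $0$) is an efficient dominating broadcast, yet the support vertex $b$ lies at distance $f(a)$ from $a$. The correct version needs a leaf $\ell\neq v$ of the support vertex $w$, so that $d(v,\ell)=f(v)+1$; then whichever broadcaster dominates $\ell$ also dominates $w$.

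In $T_k$ with $k\ge 3$, this fails precisely when $v$ is the pendant leaf of $x_j$ or $y_j$ for some $0\le j\le k-3$ and $f(v)=1$. The only vertex at distance $1$ from $v$ is $x_j$ (resp.\ $y_j$), and its only leaf is $v$ itself. Your distance check is fine, and the vertices near the centre are not the problem. But the check does not give ``every broadcaster lies in $\{u,\overline{u}\}$.''

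The patch is short and does not need that statement. A strength-$1$ broadcast from such a pendant leaf covers only the leaf and its support vertex. All other noncentral broadcasters are excluded by the corrected observation. Hence $x_{k-1}$ and $y_{k-1}$ can only be dominated by $u$ or $\overline{u}$. These two are adjacent, so only one of them broadcasts, and it must reach the far terminal leaf at distance $k>q$, a contradiction.

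The same caveat applies to the paper's own proof of Lemma~\ref{lem:Tk-unique-final}, on which the paper's proof of this corollary rests.
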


\begin{proof}
Take $G=T_k$. By Lemma~\ref{lem:Tk-unique-final}, $T_k$  has exactly two effficnet $k$-limited dominating broadcasts, and assign strength $k$ at one of the two central vertices. Suppose $T_k$ admits an efficient $\ell$-limited broadcast, say $f$, for some
$\ell<k$, then $f$ also a $k$-limited broadcast,
contradicting Lemma~\ref{lem:Tk-unique-final}. Hence $\mcr(T_k)=k$.
\end{proof}

\subsection{NP-completeness of the $k$-ELDB problem for fixed $k$}
\headinggap
We now prove that the $k$-ELDB problem is NP-complete for any given $k$.  The reduction is from the following classical NP-complete problem

\medskip
\noindent
\textsc{Exact $1$-in-$3$ SAT} 

\noindent
{
\emph{Instance:} A Boolean formula
\[
\Phi=C_1\wedge C_2\wedge\cdots\wedge C_m,
\]
where each clause $C_j$ contains exactly three literals. Each literal is either a variable $u_i$ or its negation $\overline{u_i}$.
}

\noindent
{
\emph{Question:} Does there exist a truth assignment
\[
\tau:\{u_1,u_2,\dots,u_n\}\to\{0,1\}
\]
such that, in every clause $C_j$, exactly one of the three literals is true?
}

It has been shown that the Exact $1$-in-$3$ SAT problem is NP-complete (\cite{schaefer1978}).

Fix $k\ge 2$. Let $\Phi=C_1\wedge C_2\wedge \cdots \wedge C_m$ be an instance
of Exact $1$-in-$3$ SAT with variables $u_1,u_2,\dots,u_n$. We construct a
graph $G_k(\Phi)$ as follows:

For each variable $u_i$, take a copy $T_k^i$ of the truth gadget $T_k$, and
denote its two central vertices by $u_i$ and $\overline{u_i}$. For each clause
$C_j$, add a clause vertex in $G_k(\Phi)$, again denoted by $C_j$. If a literal
$\ell\in\{u_i,\overline{u_i}\}$ appears in the clause $C_j$, then join $C_j$
to $\ell$ by a path of length $k$. We denote this path by $Q_{j,\ell}$, and
write $P_{j,\ell}=Q_{j,\ell}\setminus\{\ell\}$. Since $k$ is fixed, the graph
$G_k(\Phi)$ can be constructed in polynomial time. Note that, all clause--literal paths are taken to be internally vertex-disjoint.

The reverse implication of the reduction is driven by the following lemmas.

\begin{center}
\includegraphics[width=0.92\linewidth]{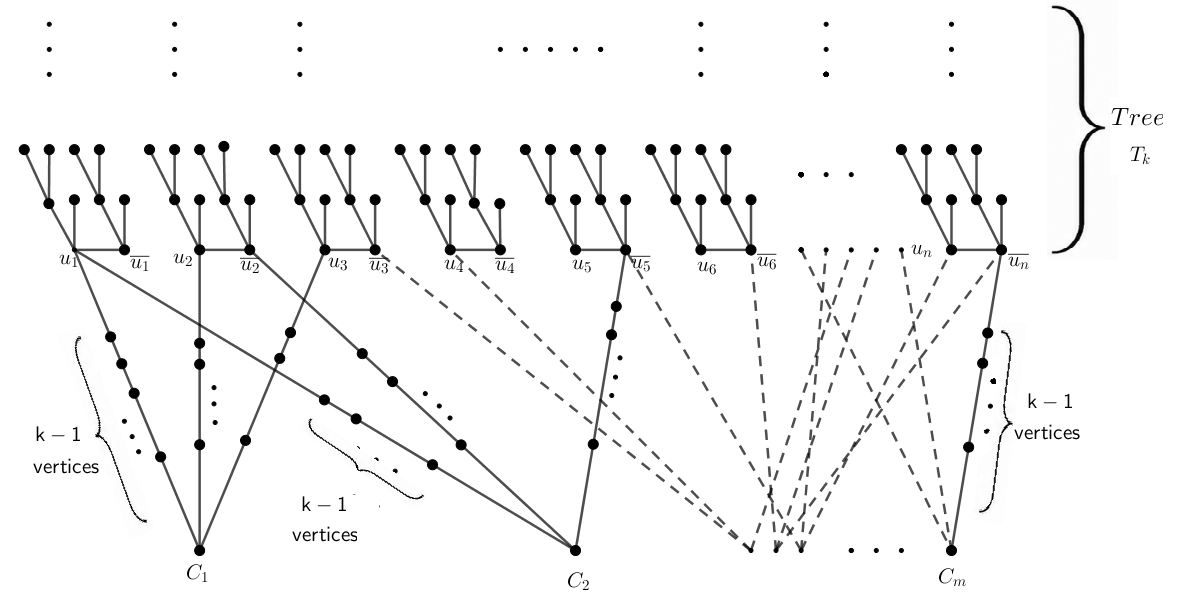}
\captionof{figure}{The reduction graph $G_k(\Phi)$ for a sample instance of Exact
$1$-in-$3$ SAT. When $k=1$, the clause--literal paths collapse to edges,
and the picture reduces to the familiar efficient domination construction.}
\label{fig:keds_final}
\end{center}

\begin{lemma}\label{lem:augmented-gadget-final}
Let $T_k^i$ be one of the variable gadgets in $G_k(\Phi)$. In any efficient
$k$-limited dominating broadcast of $G_k(\Phi)$,
\begin{enumerate}
    \item exactly one of the two vertices $u_i$ and $\overline{u_i}$
    broadcasts,
    \item the broadcast value at that vertex is $k$, and
    \item no other vertex of $T_k^i$ broadcasts.
\end{enumerate}
\end{lemma}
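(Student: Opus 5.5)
Fix an efficient $k$-limited dominating broadcast $f$ of $G_k(\Phi)$ and a gadget $T_k^i$. The plan is to adapt the proof of Lemma~\ref{lem:Tk-unique-final} to the larger graph. The only difference between $T_k^i$ inside $G_k(\Phi)$ and the isolated $T_k$ is that the central vertices $u_i,\overline{u_i}$ now carry extra paths toward clause vertices. Every pendant leaf of $T_k^i$ is still a leaf of $G_k(\Phi)$. Consequently each $x_j,y_j$ with $0\le j\le k-2$ is still a support vertex, and so are $x_{k-2}, y_{k-2}$ for the terminal leaves. Observation~\ref{obs_oTk_final} therefore still applies to all these vertices.

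\textbf{Step 1 (no noncentral vertex of $T_k^i$ broadcasts).} Let $w\in V(T_k^i)\setminus\{u_i,\overline{u_i}\}$ with $f(w)=s\ge1$. I would show that some support vertex of $T_k^i$ lies at distance exactly $s$ from $w$, contradicting Observation~\ref{obs_oTk_final}. The support vertices $x_0,\dots,x_{k-2},y_0,\dots,y_{k-2}$ form a path of length $2k-3$ through the centre. From a vertex $w$ on the spine $x_{k-1}\dots x_0y_0\dots y_{k-1}$, or from a pendant leaf, the distances to these support vertices realise every value in $\{1,\dots,k\}$. This is the content of the remark in Lemma~\ref{lem:Tk-unique-final} that every noncentral vertex has a support vertex at each possible positive distance; I would verify it explicitly by going along the longer side of the spine. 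Since $f(w)\le k$, this gives the contradiction. For the small case $k=2$, and for $w$ near the centre, I would check the distances directly.

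\textbf{Step 2 (the leaves force a central broadcaster of strength $k$).} Consider the terminal leaf $y_{k-1}$. It must be dominated by some broadcaster $z$ with $d(z,y_{k-1})\le f(z)\le k$. By Step 1, $z$ is not a noncentral vertex of $T_k^i$. If $z\notin T_k^i$, then any path from $z$ to $y_{k-1}$ enters the gadget through $u_i$ or $\overline{u_i}$, so $d(z,y_{k-1})\ge d(z,\overline{u_i})+k-1$.

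\textbf{Step 3 (the main obstacle: outside broadcasters).} The hardest point is to rule out broadcasters outside $T_k^i$, such as vertices on clause paths or clause vertices, whose broadcast penetrates the gadget. My first attempt uses the fact that any such $z$ dominating $y_{k-1}$ lies within distance $1$ of $\overline{u_i}$. Otherwise it must be at distance $s-(k-1)\in\{0,1\}$ from $\overline{u_i}$ on a path $P_{j,\overline{u_i}}$ while dominating $y_{k-1}$. I would then show that $z$ necessarily reaches a support vertex $y_j$ at exactly distance $f(z)$, again violating Observation~\ref{obs_oTk_final}. Concretely, if $z$ is adjacent to $\overline{u_i}$, then $f(z)=k$, and it reaches $y_{k-2}$ at distance $k-1<k$ and $y_{k-1}$ at distance $k$. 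It also reaches $x_{k-2}$, the support vertex of $x_{k-1}$, at distance $k$, a contradiction. Hence $y_{k-1}$ is dominated by $\overline{u_i}$ with $f(\overline{u_i})\ge k-1$, or by $u_i$ with $f(u_i)=k$. The case $f(\overline{u_i})=k-1$ places $y_{k-2}$ at distance $k-2$, which I would rule out by looking at $x_{k-1}$. That vertex then has to be dominated, only $u_i$ can do so, and $u_i$ would then overlap with $\overline{u_i}$. Symmetrically for $x_{k-1}$.

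Combining these, exactly one of $u_i,\overline{u_i}$ broadcasts. It must have strength $k$ so that it reaches the farther terminal leaf, and efficiency prevents both from broadcasting, since their balls intersect. Step 1 gives the remaining claim that no other vertex of $T_k^i$ broadcasts.
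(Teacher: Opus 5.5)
Your proposal is essentially the paper's proof. Like the paper, you exclude noncentral gadget vertices via Observation~\ref{obs_oTk_final}, show that an outside broadcaster reaching a terminal leaf must be a clause-path neighbour of the nearer central vertex with strength $k$ (so it sits at distance exactly $k$ from the opposite support vertex $x_{k-2}$, resp.\ $y_{k-2}$), and conclude that exactly one of $u_i,\overline{u_i}$ broadcasts, with value $k$. One caveat, which you share with the paper: Observation~\ref{obs_oTk_final} only gives a contradiction when the support vertex has a leaf other than the broadcaster, so your Step~1 does not exclude a pendant leaf of $x_j$ or $y_j$ with $j\le k-3$ broadcasting with strength $1$; the fix is that the dominator of $x_{j+1}$ must then lie at distance exactly its strength from $x_{j+1}$, which forces it to be the pendant leaf of $x_{j+1}$ with strength $1$, and iterating reaches $x_{k-2}$, whose two leaves give the contradiction.
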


\begin{proof}
Fix a variable gadget $T_k^i$. We first show that the degree-one end vertices
$x_{k-1}$ and $y_{k-1}$ of $T_k^i$ must be dominated by broadcasting vertices
belonging to $T_k^i$.

The only vertices of $G_k(\Phi)$ not belonging to $T_k^i$ that lie within
distance $k$ from $x_{k-1}$ are the first internal vertices on
clause--literal paths attached at $u_i$. Such a vertex is at distance exactly
$k$ from the support vertex $y_{k-2}$ of $T_k^i$, and hence cannot be a
broadcasting vertex by Observation~\ref{obs_oTk_final}. Similarly, the first
internal vertices on clause--literal paths attached at $\overline{u_i}$ are at
distance exactly $k$ from the support vertex $x_{k-2}$, and so they cannot
dominate $y_{k-1}$. Therefore, $x_{k-1}$ and $y_{k-1}$ must be dominated by
broadcasting vertices belonging to $T_k^i$.

By the construction of $T_k^i$, every noncentral vertex of $T_k^i$ has a
support vertex at each possible positive broadcast distance from it. Hence, again
by Observation~\ref{obs_oTk_final}, no noncentral vertex of $T_k^i$ can be a
broadcasting vertex. Thus the only possible broadcasting vertices in $T_k^i$
are $u_i$ and $\overline{u_i}$.

At least one of $u_i$ and $\overline{u_i}$ must broadcast, since otherwise
the end vertices $x_{k-1}$ and $y_{k-1}$ would remain undominated. They cannot
both broadcast, because they are adjacent and would violate efficient property.
Finally, the selected central vertex must reach the end vertex at the far end of
the opposite branch, which is at distance $k$. Hence its broadcast value must
be $k$.
\end{proof}

{
For a literal $\ell\in\{u_i,\overline{u_i}\}$, the other central vertex in the
same variable gadget is called the \emph{complementary literal} of $\ell$.
Thus the complementary literal of $u_i$ is $\overline{u_i}$, and the
complementary literal of $\overline{u_i}$ is $u_i$. These two vertices are
adjacent in $T_k^i$.
}

\begin{lemma}\label{lem:no-path-broadcast-final2}
No internal vertex of any clause--literal path $Q_{j,\ell}$ can be a
broadcasting vertex in an efficient $k$-limited dominating broadcast of
$G_k(\Phi)$.
\end{lemma}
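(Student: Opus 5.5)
The plan is to show that every internal vertex of $Q_{j,\ell}$ is already dominated by the gadget broadcaster that Lemma~\ref{lem:augmented-gadget-final} forces into $T_k^i$. A broadcasting vertex always dominates itself, since $d(v,v)=0\le f(v)$. So if such a vertex broadcast, it would be dominated twice, contradicting efficiency.

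Fix an $f$ that is a $k$-ELDB of $G_k(\Phi)$, a clause $C_j$, and a literal $\ell\in\{u_i,\overline{u_i}\}$ occurring in $C_j$. Write the path as $Q_{j,\ell}=\ell\,p_1\,p_2\cdots p_{k-1}\,C_j$. Its internal vertices are $p_1,\dots,p_{k-1}$, and $d(\ell,p_s)\le s$ along the path. (For $k=1$ there are no internal vertices and the statement is vacuous, so assume $k\ge 2$.) Let $\ell'$ be the complementary literal of $\ell$. By Lemma~\ref{lem:augmented-gadget-final}, exactly one of $\ell,\ell'$ broadcasts, and it does so with value $k$.

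\emph{Case 1: $f(\ell)=k$.} For each $1\le s\le k-1$ we have $d(\ell,p_s)\le s\le k-1<k$, so $\ell$ dominates $p_s$.

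\emph{Case 2: $f(\ell')=k$.} Since $\ell'$ is adjacent to $\ell$, we get $d(\ell',p_s)\le 1+s\le k$. Hence $\ell'$ dominates $p_s$ for every $1\le s\le k-1$.

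In either case each $p_s$ is dominated by a broadcaster in $T_k^i$, and that broadcaster is distinct from $p_s$. If $f(p_s)>0$, then $p_s$ would also dominate itself and so be dominated at least twice. This contradicts the efficiency of $f$, so no internal vertex of $Q_{j,\ell}$ broadcasts.

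There is no real obstacle here; the step needing the most care is the distance bound. We only need the upper bounds $d\le s$ and $d\le s+1$, which hold along the path itself, so shortcuts through other parts of $G_k(\Phi)$ cannot hurt the argument. The bound $s+1\le k$ is exactly why the internal vertices stop at $p_{k-1}$, while the endpoint $C_j$ at distance $k+1$ from $\ell'$ is excluded.
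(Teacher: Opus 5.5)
Your proof is correct and follows essentially the same route as the paper. You invoke Lemma~\ref{lem:augmented-gadget-final} to get a value-$k$ broadcaster at $\ell$ or its complementary literal, bound $d(\ell,p_s)\le k-1$ and $d(\ell',p_s)\le k$ along the path, and conclude by efficiency since a broadcaster dominates itself; your explicit indexing and the remark that only distance upper bounds are needed make it slightly tidier, but the idea is identical.
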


\begin{proof}

Let $z$ be an internal vertex of a clause--literal path $Q_{j,\ell}$. By
Lemma~\ref{lem:augmented-gadget-final}, exactly one of the two central vertices in
the corresponding variable gadget broadcasts with value $k$. If the selected
broadcasting vertex is $\ell$, then every vertex on the path $Q_{j,\ell}$,
including $z$, lies within distance at most $k$ from $\ell$, and hence is
already dominated.

If the selected broadcasting vertex is the complementary literal of $\ell$, then
$z$ is still within distance at most $k$ from that vertex. Indeed, the
complementary literal is adjacent to $\ell$ inside the gadget, and since $z$
is an internal vertex of the length-$k$ path from $\ell$ to $C_j$, we have
$d(\ell,z)\le k-1$. Therefore $z$ is already dominated in either case. Since
a broadcasting vertex always dominates itself, $z$ cannot broadcast without
violating efficiency.

\end{proof}

\begin{lemma}\label{lem:no-clause-broadcast-final2}
No clause vertex of $G_k(\Phi)$ can be a broadcasting vertex in an efficient
$k$-limited dominating broadcast of $G_k(\Phi)$.
\end{lemma}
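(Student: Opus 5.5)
Our plan is to argue by contradiction, in the spirit of the proof of Lemma~\ref{lem:no-path-broadcast-final2}. Suppose $f$ is an efficient $k$-limited dominating broadcast of $G_k(\Phi)$ and that some clause vertex $C_j$ has $f(C_j)\ge 1$. We will show that $C_j$, or some vertex within distance $f(C_j)$ of it, is already dominated by a literal broadcaster. Such a vertex would then be dominated twice, which contradicts efficiency.

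The first step is to record what the literal broadcasters cover, using Lemma~\ref{lem:augmented-gadget-final}. Let $\ell_1,\ell_2,\ell_3$ be the literals of $C_j$. For each of them, exactly one of $\ell_s$ and its complementary literal $\overline{\ell_s}$ broadcasts, and it does so with value $k$.

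\emph{Case 1: some $\ell_s$ broadcasts.} Then $d(\ell_s,C_j)=k$, so $\ell_s$ already dominates $C_j$. Since every broadcaster dominates itself, $C_j$ would be dominated twice. This case is immediate.

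\emph{Case 2: for all three literals, the complementary literal broadcasts.} This is the main obstacle. Here $C_j$ is at distance $k+1$ from each complementary broadcaster, so $C_j$ itself is not dominated by the gadgets. We must therefore find some other vertex within distance $f(C_j)\ge 1$ of $C_j$ that is already dominated.

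The natural candidate is the neighbour $w$ of $C_j$ on the path $Q_{j,\ell_1}$. It lies at distance $k-1$ from $\ell_1$, hence at distance at most $k$ from the complementary broadcaster $\overline{\ell_1}$, which is adjacent to $\ell_1$. So $w$ is already dominated by $\overline{\ell_1}$. Since $d(C_j,w)=1\le f(C_j)$, the broadcast from $C_j$ also reaches $w$, and $w$ is dominated twice, again contradicting efficiency.

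Two details need checking in this step. First, $w$ really is an internal vertex of $Q_{j,\ell_1}$, which holds because $k\ge 2$ (the construction was fixed with $k\ge2$). Second, the distance from $\overline{\ell_1}$ to $w$ is at most $(k-1)+1=k$. This distance bound is the same observation already used in the proof of Lemma~\ref{lem:no-path-broadcast-final2}.

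With both cases excluded, $f(C_j)=0$ for every clause vertex, which proves the lemma. In fact Case~2 alone suffices, since the neighbour $w$ is dominated by the gadget in every case. So the final write-up can be unified as follows: every neighbour of $C_j$ is already dominated by the variable gadgets, and any broadcast of strength $\ge 1$ from $C_j$ reaches such a neighbour.
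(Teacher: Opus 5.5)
Your proof is correct and is essentially the paper's argument: the neighbour $w$ of $C_j$ on an incident clause--literal path is already dominated by whichever of $\ell$ and $\overline{\ell}$ broadcasts with value $k$ (Lemma~\ref{lem:augmented-gadget-final}), so any positive broadcast from $C_j$ would dominate $w$ a second time. As you note, Case~1 is redundant, and your explicit check $d(\overline{\ell},w)\le k$ spells out a step the paper leaves implicit.
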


\begin{proof}
Let $C_j$ be a clause vertex. Its three neighbors on the incident
clause--literal paths are already dominated by the broadcasting vertices by
Lemma~\ref{lem:augmented-gadget-final}. If $C_j$ were also to broadcast, then
those neighbors would be dominated twice, which is impossible in an efficient
broadcast. Hence, clause vertices do not broadcast.
\end{proof}

\begin{lemma}\label{lem:clause-exact-final2}
Let $
C_j=\ell_{j,1}\vee \ell_{j,2}\vee \ell_{j,3}
$ be a clause. In any efficient $k$-limited dominating broadcast of $G_k(\Phi)$,
the only vertices that can dominate $C_j$ are the three literal vertices
$\ell_{j,1},\ell_{j,2},\ell_{j,3}$. Consequently, $C_j$ is dominated exactly
once if and only if exactly one of those three literal vertices broadcasts.
\end{lemma}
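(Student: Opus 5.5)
The plan is to combine the three preceding lemmas to locate every broadcaster, and then do a distance computation from $C_j$. By Lemma~\ref{lem:augmented-gadget-final}, the broadcasting vertices inside the gadgets are exactly one central vertex per gadget $T_k^i$, each with value $k$, and no other gadget vertex broadcasts. Lemmas~\ref{lem:no-path-broadcast-final2} and~\ref{lem:no-clause-broadcast-final2} rule out internal path vertices and clause vertices as broadcasters. Every vertex of $G_k(\Phi)$ falls into one of these three classes, so the set of broadcasters is exactly a set of literal vertices, one per variable, each broadcasting with strength $k$. Consequently $C_j$ can only be dominated by a broadcasting literal vertex $\ell$ with $d(\ell,C_j)\le k$.

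The key step is to compute $d(\ell,C_j)$ for an arbitrary literal vertex $\ell$. If $\ell$ occurs in $C_j$, the path $Q_{j,\ell}$ gives $d(\ell,C_j)\le k$. Since the paths are internally disjoint and the only edges at the internal vertices lie on the path, there is no shortcut, so the distance is exactly $k$ and $\ell$ dominates $C_j$ whenever it broadcasts. If $\ell$ does not occur in $C_j$, every path from $\ell$ to $C_j$ must enter $C_j$ through the last internal vertex of some $Q_{j,\ell'}$ with $\ell'$ in $C_j$. That path therefore traverses all of $Q_{j,\ell'}$ of length $k$ and reaches $\ell'\ne\ell$ only after at least one more edge. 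Hence $d(\ell,C_j)\ge k+1$, and $\ell$ cannot dominate $C_j$. (Here I implicitly assume $\ell'$ and $\ell$ are distinct vertices, which holds when $\ell$ is not a literal of $C_j$.)

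This distance lower bound is the main obstacle, and I would make it rigorous by structural induction on the shape of $G_k(\Phi)$. The neighbours of $C_j$ are exactly the three endpoints of the $P_{j,\ell_{j,t}}$. Every such endpoint lies at distance $k-1$ from its literal $\ell_{j,t}$, and its only other route is back through $C_j$. So any walk from $C_j$ to a vertex outside $\bigcup_t P_{j,\ell_{j,t}}$ first reaches some $\ell_{j,t}$ after $k$ steps. I would state this explicitly and use it for both the upper bound and the lower bound above. For $k\ge 2$ the paths have internal vertices, so this description of the neighbourhood of $C_j$ is valid.

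For the consequence, each broadcasting literal vertex of $C_j$ dominates $C_j$, and no other broadcaster does. The number of broadcasters dominating $C_j$ therefore equals the number of broadcasting vertices among $\ell_{j,1},\ell_{j,2},\ell_{j,3}$. Efficiency requires this number to be exactly one, which gives the equivalence.
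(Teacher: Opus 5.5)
Your proposal is correct and follows essentially the paper's argument: Lemmas~\ref{lem:augmented-gadget-final}, \ref{lem:no-path-broadcast-final2} and~\ref{lem:no-clause-broadcast-final2} restrict the broadcasters to central gadget vertices of strength $k$, and these are then separated by their distance to $C_j$ (exactly $k$ for the three incident literals, at least $k+1$ for all others). Two details of yours improve on the paper: the observation that every path into $C_j$ must traverse an entire $Q_{j,\ell'}$ justifies the paper's bare claim that ``every path has length strictly greater than $k$'', and splitting on whether a literal occurs in $C_j$, rather than on complementarity, also covers clauses containing both $u_i$ and $\overline{u_i}$, where the paper's claim $d(C_j,\overline{\ell_{j,t}})=k+1$ would fail.
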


\begin{proof}
By construction, each literal vertex $\ell_{j,t}$ lies at distance exactly $k$
from the clause vertex $C_j$. Thus if $\ell_{j,t}$ broadcasts with value $k$,
then it dominates $C_j$.

The complementary literal $\overline{\ell_{j,t}}$ is adjacent to $\ell_{j,t}$
inside the corresponding variable gadget. Hence,
$
d(C_j,\overline{\ell_{j,t}})=k+1,
$ and so the complementary literal cannot dominate $C_j$ in a $k$-limited
broadcast.

Now consider any other broadcasting vertex. By
Lemmas~\ref{lem:no-path-broadcast-final2} and
\ref{lem:no-clause-broadcast-final2}, neither an internal  vertex of path $Q_{j,\ell}$ nor the
clause vertex can broadcast. The only remaining candidates are central
vertices of variable gadgets. If such a vertex is not one of the three literal
vertices appearing in the clause, then every path from that vertex to $C_j$ has
length strictly greater than $k$, and so it cannot dominate $C_j$.

Therefore, the only  vertices that can dominate $C_j$ are precisely
the three literal vertices appearing in the clause. Since the broadcast is
efficient, $C_j$ must be dominated exactly once, and this happens exactly when
one of those three literals broadcasts and the other two do not.
\end{proof}

\begin{theorem}\label{thm:fixed-k-final}
For every fixed integer $k\ge 1$, the $k$-ELDB problem is NP-complete for arbitrary graphs.
\end{theorem}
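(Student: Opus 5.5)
The proof splits into three parts: membership in NP, the case $k=1$ taken from the literature, and for $k\ge 2$ a polynomial reduction from \textsc{Exact $1$-in-$3$ SAT} via $G_k(\Phi)$, checked in both directions.

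\emph{Membership and $k=1$.} A certificate is a function $f:V(G)\to[k]_0$. We compute all pairwise distances by breadth-first search and then check, for each vertex $y$, that exactly one broadcaster $x$ satisfies $d(x,y)\le f(x)$. This takes polynomial time, so the problem is in NP. For $k=1$ a $1$-ELDB is exactly an efficient dominating set, and that problem is NP-complete \cite{bange1988,haynes1998}. So from now on we fix $k\ge 2$. The graph $G_k(\Phi)$ has $O(nk+mk)$ vertices, so it is built in polynomial time.

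\emph{Forward direction.} Let $\tau$ be a $1$-in-$3$ assignment. In each gadget $T_k^i$ we give value $k$ to the true literal vertex ($u_i$ if $\tau(u_i)=1$, otherwise $\overline{u_i}$), and value $0$ to every other vertex.
\begin{itemize}
\item \emph{Inside each gadget.} By Lemma~\ref{lem:Tk-unique-final}, every vertex of $T_k^i$ is dominated exactly once by that gadget's broadcaster. Broadcasters in other gadgets are too far away to reach $T_k^i$: a path from another gadget into $T_k^i$ must pass through some clause vertex, so it has length at least $2k$.
\item \emph{Internal path vertices.} Let $z$ be an internal vertex of $Q_{j,\ell}$ at distance $s\in\{1,\dots,k-1\}$ from $\ell$. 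The broadcaster of $\ell$'s own gadget is $\ell$ or its complement, and reaches $z$ at distance $s$ or $s+1\le k$. We must also rule out double domination through $C_j$. Any other broadcaster reaching $z$ via $C_j$ is at distance at least $k+(k-s)>k$.
\item \emph{Clause vertices.} $C_j$ is dominated only by broadcasting literal vertices in $C_j$, as in the argument of Lemma~\ref{lem:clause-exact-final2}. Since exactly one literal of $C_j$ is true under $\tau$, $C_j$ is dominated exactly once.
\end{itemize}
Hence $f$ is a $k$-ELDB.

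\emph{Backward direction.} Let $f$ be a $k$-ELDB of $G_k(\Phi)$. By Lemmas~\ref{lem:augmented-gadget-final}, \ref{lem:no-path-broadcast-final2} and \ref{lem:no-clause-broadcast-final2}, the broadcasters are exactly one central vertex per gadget, each with value $k$. Define $\tau(u_i)=1$ if and only if $u_i$ broadcasts. Then a literal is true exactly when its vertex broadcasts. By Lemma~\ref{lem:clause-exact-final2}, efficiency at $C_j$ forces exactly one literal of $C_j$ to broadcast, so $\tau$ is a $1$-in-$3$ assignment.

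\emph{Main obstacle.} The hardest step is the distance bookkeeping in the forward direction, namely showing that no vertex is dominated twice through clause vertices. This fails if some clause contains two literals of the same variable, for example $u_i$ and $\overline{u_i}$. Then the second broadcaster can reach internal path vertices through a short route. I would handle this by assuming the standard normalization that each clause uses three distinct variables, which keeps Exact $1$-in-$3$ SAT NP-complete \cite{schaefer1978}. Under that assumption every relevant distance to another gadget's centre is at least $2k-s+1>k$.
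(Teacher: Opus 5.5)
Your proposal is correct and follows the paper's proof: NP membership, the $k=1$ case via efficient domination, and the reduction from Exact $1$-in-$3$ SAT through $G_k(\Phi)$, with the backward direction resting on the same four lemmas and your forward-direction distance checks (other gadgets at distance at least $2k$, detours through $C_j$ of length $2k-s>k$) making explicit what the paper only asserts. One correction to your closing remark: a clause containing both $u_i$ and $\overline{u_i}$ causes no double domination, since only one of them broadcasts, so the forward direction is unaffected; the case your distinct-variables normalization (which the paper leaves implicit) really rules out is a repeated literal such as $u_i\vee u_i\vee \ell$, where a single broadcaster at $u_i$ dominates $C_j$ exactly once although two literal occurrences are true, which would break the backward direction.
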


\begin{proof}
Membership in NP is immediate.

For $k=1$, the problem is precisely the efficient domination problem, since a
$1$-ELDB is the same as an efficient dominating set, and is known to be  NP-complete. \cite{bange1988,haynes1998}.

It remains to consider the case $k\ge 2$. For this fixed value of $k$, we
reduce the $k$-ELDB problem from Exact $1$-in-$3$ SAT problem. Let $\Phi$ be an instance of Exact
$1$-in-$3$ SAT, and let $G_k(\Phi)$ be the graph constructed above.

Suppose first that $\Phi$ has a truth assignment under which each clause
contains exactly one true literal. Define a broadcast $f$ on $G_k(\Phi)$ as follows. For each literal vertex
$\ell\in\{u_i,\overline{u_i}\}$, set
\[
f(\ell)=
\begin{cases}
k, & \text{if the literal $\ell$ is true under the given truth assignment},\\
0, & \text{if the literal $\ell$ is false under the given truth assignment}.
\end{cases}
\]
For every other vertex $v$, set $f(v)=0$.

Inside each variable gadget, the chosen central vertex dominates the gadget
exactly once. Along every clause--literal path, the internal vertices are dominated
by the selected broadcaster in the corresponding variable gadget. Finally, because
each clause contains exactly one true literal, each clause vertex is dominated by
exactly one of its three incident literals. Thus, $f$ is an efficient
$k$-limited dominating broadcast of $G_k(\Phi)$.

Conversely, suppose that $G_k(\Phi)$ admits an efficient $k$-limited
dominating broadcast $f$. By Lemma~\ref{lem:augmented-gadget-final}, for each
variable $u_i$, exactly one of $u_i$ and $\overline{u_i}$ broadcasts with
value $k$. We therefore define a truth assignment $T$ by
\[
T(u_i)=1 \quad \Longleftrightarrow \quad f(u_i)=k,
\] or equivalently,
$
T(u_i)=0 \quad \Longleftrightarrow \quad f(\overline{u_i})=k.
$

Now, let
$
C_j=\ell_{j,1}\vee \ell_{j,2}\vee \ell_{j,3}
$
be any clause. By Lemma~\ref{lem:clause-exact-final2}, a clause vertex $C_j$
is dominated exactly once if and only if exactly one of the three literal
vertices $\ell_{j,1},\ell_{j,2},\ell_{j,3}$ broadcasts. By the definition of the
truth assignment $T$, this is equivalent to saying that exactly one of the three
literals in the clause is true under $T$. Since $C_j$ is arbitrary, every
clause of $\Phi$ has exactly one true literal. Thus, $T$ is a satisfying
assignment for the Exact $1$-in-$3$ SAT instance.Therefore, we have shown that
\[
\Phi \text{ is satisfiable } \iff G_k(\Phi) \text{ admits an efficient }
k\text{-limited dominating broadcast}.
\]
Since $k$ is fixed, the construction of $G_k(\Phi)$ is polynomial in the size
of $\Phi$. Hence the efficient $k$-limited broadcast domination problem is
NP-complete for every fixed $k$, where $k\ge 2$. Together with the case $k=1$, this proves
the theorem.
\end{proof}

\begin{corollary}\label{cor:compute-mcr-final}
Computing $\mcr(G)$ is NP-hard for arbitrary graphs.
\end{corollary}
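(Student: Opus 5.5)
The plan is a polynomial-time reduction from the decision problem of Theorem~\ref{thm:fixed-k-final}. Fix any $k\ge 2$ (or $k=1$, using efficient domination). We show that a polynomial-time algorithm computing $\mcr(G)$ would decide, in polynomial time, whether an arbitrary graph admits a $k$-ELDB.

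The key step is a monotonicity property. If $G$ admits a $k$-ELDB $f$, then $f$ is also an efficient $k'$-limited dominating broadcast for every $k'\ge k$. The reason is that the range $\{0,\dots,k\}$ is contained in $\{0,\dots,k'\}$, and the domination conditions do not depend on the limit. Hence
\[
G \text{ admits a } k\text{-ELDB} \iff \mcr(G)\le k,
\]
where $\mcr(G)=\infty$ if $G$ admits no efficient broadcast for any $k$. (Every connected $G$ does admit one for $k=\operatorname{rad}(G)$, by letting a central vertex broadcast with strength equal to its eccentricity, so this case never arises.) Given an instance $G$ of the $k$-ELDB problem, we therefore compute $\mcr(G)$ with the hypothetical algorithm and compare it with the fixed constant $k$. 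This is a one-query Turing reduction. Equivalently, the decision version ``is $\mcr(G)\le k$?'' with $k$ part of the input contains the NP-complete fixed-$k$ problem as a special case.

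The only delicate point is the monotonicity. In particular, we must confirm that the definition of a $k$-ELDB places no requirement that some broadcaster actually uses the full value $k$. It does not, since a $k$-limited broadcast is just a map into $\{0,1,\dots,k\}$. With that settled, NP-hardness follows directly from Theorem~\ref{thm:fixed-k-final}. Note that the reductions built in that proof already produce graphs $G_k(\Phi)$ with $\mcr(G_k(\Phi))\le k$ if and only if $\Phi$ is a yes-instance, so even the specific family constructed there suffices.
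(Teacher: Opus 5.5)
Your proof is correct and follows the same route as the paper: a polynomial-time algorithm for $\mcr(G)$ would decide the fixed-$k$ problem of Theorem~\ref{thm:fixed-k-final} by testing whether $\mcr(G)\le k$. The one difference is that you explicitly justify the equivalence ``$G$ admits a $k$-ELDB $\iff \mcr(G)\le k$'' through monotonicity in $k$, a step the paper uses without comment.
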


\begin{proof}
Suppose that $mcr(G)$ could be computed in polynomial time. Then, for any fixed
$k\ge 1$, we could decide in polynomial time whether a graph admits an efficient
$k$-limited dominating broadcast simply by computing $mcr(G)$ and consequently checking
whether $mcr(G)\le k$. This contradicts Theorem~\ref{thm:fixed-k-final}
unless $\mathrm{P}=\mathrm{NP}$. Therefore computing $mcr(G)$ is NP-hard.
\end{proof}

\begin{remark}
This reduction is compatible with the classical reduction for the efficient domination problem from Exact $1$-in-$3$ SAT. When  $k=1 $, the truth gadget is the edge  $u_i\overline{u_i} $, and the
clause--literal paths are ordinary edges. Thus, the construction specializes to
the usual variable-pair and clause-incidence structure used in reductions for
efficient domination. 
\end{remark}
\section{Conclusion and Future work}
\headinggap
We have studied efficient $k$-limited dominating broadcasts from two complementary viewpoints. On one hand, we proved that the $k$-ELDB problem is NP-complete for arbitrary graphs. On the other hand, we showed that trees admit a polynomial-time dynamic programming algorithm. By encoding the behaviour of rooted subtrees through a finite collection of boundary states, one can compute $\gamma_{ebk}(T)$ for fixed $k$, and hence determine $\mcr(T)$.

The tree case illustrates particularly well how local structure can be exploited in the presence of a global efficiency condition. Although efficient domination itself is already quite rigid, allowing bounded broadcast strength creates a richer hierarchy of possibilities, and the parameter $\mcr(T)$ captures this transition in a natural way. 

Among the various directions in which the present work could be continued, one natural question is whether the state-based method can be extended beyond trees, for instance to block graphs, line graphs of trees, or other graph classes with a useful decomposition structure. On the complexity side, it would also be interesting to understand the status of the problem on intermediate classes such as bipartite, chordal, or planar graphs, thus paving the way for a broader study of the efficient $k$-limited broadcast domination problem.


\begin{thebibliography}{00}

\bibitem{bange1988}
D.W. Bange, A.E. Barkauskas, P.J. Slater,
Efficient dominating sets in graphs,
in: R.D. Ringeisen, F.S. Roberts (Eds.),
Applications of Discrete Mathematics,
SIAM, Philadelphia, 1988, pp. 189--199.

\bibitem{caceres2018}
J. C{\'a}ceres, C. Hernando, M. Mora, I.M. Pelayo, M.L. Puertas,
General bounds on limited broadcast domination,
Discrete Math. Theor. Comput. Sci. 20 (2) (2018), Article 13.

\bibitem{cockayne2011broadcasts}
E.J. Cockayne, S. Herke, C.M. Mynhardt,
Broadcasts and domination in trees,
Discrete Math. 311 (13) (2011) 1235--1246.

\bibitem{dunbar2006broadcasts}
J.E. Dunbar, D.J. Erwin, T.W. Haynes, S.M. Hedetniemi, S.T. Hedetniemi,
Broadcasts in graphs,
Discrete Appl. Math. 154 (1) (2006) 59--75.

\bibitem{erwin2001cost}
D.J. Erwin,
Cost domination in graphs,
Ph.D. thesis, Western Michigan University, Kalamazoo, MI, 2001.

\bibitem{haynes1998}
T.W. Haynes, S.T. Hedetniemi, P.J. Slater,
Fundamentals of Domination in Graphs,
Marcel Dekker, New York, 1998.

\bibitem{haynes2021structures}
T.W. Haynes, S.T. Hedetniemi, M.A. Henning (Eds.),
Structures of Domination in Graphs,
Developments in Mathematics, Vol. 66,
Springer, Cham, 2021.

\bibitem{heggernes2006}
P. Heggernes, D. Lokshtanov,
Optimal broadcast domination in polynomial time,
Discrete Math. 306 (24) (2006) 3267--3280.

\bibitem{herke2009dominating}
S.R.A. Herke,
Dominating broadcasts in graphs,
Ph.D. thesis, University of Victoria, Victoria, BC, 2009.

\bibitem{herke2009radial}
S.R.A. Herke, C.M. Mynhardt,
Radial trees,
Discrete Math. 309 (20) (2009) 5950--5962.

\bibitem{lunney2015more}
S. Lunney, C.M. Mynhardt,
More trees with equal broadcast and domination numbers,
Australas. J. Combin. 61 (2015) 251--272.

\bibitem{mitchell1979linear}
S.L. Mitchell, E.J. Cockayne, S.T. Hedetniemi,
Linear algorithms on recursive representations of trees,
J. Comput. Syst. Sci. 18 (1) (1979) 76--85.

\bibitem{mynhardt2013class}
C.M. Mynhardt, J. Wodlinger,
A class of trees with equal broadcast and domination numbers,
Australas. J. Combin. 56 (2013) 3--22.

\bibitem{schaefer1978}
T.J. Schaefer,
The complexity of satisfiability problems,
in: Proceedings of the Tenth Annual ACM Symposium on Theory of Computing,
ACM, New York, 1978, pp. 216--226.

\bibitem{seager2008dominating}
S.M. Seager,
Dominating broadcast of caterpillars,
Ars Combin. 88 (2008) 307--319.

\end{thebibliography}
\end{document}